\newcites{supp}{Appendix References}
\def\BState{\State\hskip-\ALG@thistlm}
\newcounter{partcounter}
\newcommand{\myprefix}{}
\newcommand{\mylabel}[1]{\label{\myprefix#1}}
\newcommand{\myref}[1]{\ref{\myprefix#1}}
\newcommand{\myeqref}[1]{\eqref{\myprefix#1}}
\newcommand{\setprefix}[1]{\renewcommand{\myprefix}{#1}}
\newcommand{\mypartprefix}{%
  \ifnum\value{partcounter}>0 % Check if partcounter is greater than 1
    \Alph{partcounter}%
  \fi
}
\newcommand{\mypart}{
  \stepcounter{partcounter}  % Increment the part counter
  % Reset the other counters
  \setcounter{section}{0}
  \setcounter{theorem}{0}
  \setcounter{lemma}{0}
  \setcounter{figure}{0}
  \setcounter{algorithm}{0}
  \setcounter{definition}{0}
}
\numberwithin{equation}{section}
\begin{document}
\title{Linear Search for an Escaping Target with Unknown Speed\thanks{This is the full version of the paper which appeared in the Proceedings of IWOCA 2024: 35th International Workshop on Combinatorial Algorithms, SVLNCS, 1-3 July 2024 Ischia, Italy.}
}

\author{
Jared Coleman\inst{1}
\and
Dmitry Ivanov\inst{2}
\and
Evangelos Kranakis\inst{2}~\inst{5}
\and
Danny Krizanc\inst{3}
\and
Oscar Morales Ponce\inst{4}
}

\institute{
% Department of Computer Science,
University of Southern California, California, USA \\
\email{jaredcol@usc.edu}
\and
% School of Computer Science, 
Carleton University, Ottawa, Ontario, Canada \\
\email{dimaivanov@cmail.carleton.ca} \\
\email{kranakis@scs.carleton.ca}
\and
% Department of Mathematics \& Computer Science,
Wesleyan University, Middletown CT, USA \\
\email{dkrizanc@wesleyan.edu}
\and
% Department of Computer Engineering and Computer Science, 
California State University, Long Beach, USA \\
\email{Oscar.MoralesPonce@csulb.edu}
\and
Research supported in part by NSERC Discovery grant.
}

\authorrunning{J. Coleman et al.}

\maketitle

%\centerline{\today}

\begin{abstract}
We consider linear search for an escaping target whose speed and initial position are unknown to the searcher. A searcher (an autonomous mobile agent) is initially placed at the origin of the real line and can move with maximum speed $1$ in either direction along the line. An oblivious mobile target that is moving away from the origin with an unknown constant speed $v<1$ is initially placed by an adversary on the infinite line at distance $d$ from the origin in an unknown direction. We consider two cases, depending on whether $d$ is known or unknown. The main contribution of this paper is to prove a new lower bound and give algorithms leading to new upper bounds for search in these settings. This results in an optimal (up to lower order terms in the exponent) competitive ratio in the case where $d$ is known and improved upper and lower bounds for the case where $d$ is unknown. Our results solve an open problem proposed in [Coleman et al., Proc. OPODIS 2022]. 

\vspace{0.5cm}

\noindent
{\bf Key words and phrases.} Autonomous agent, Competitive ratio, Linear Search, Oblivious Mobile Target, Searcher.

\end{abstract}

% Long Version
\mypart
\setprefix{}
\excludecomment{mainonly}
\includecomment{appendixonly}

\section{Introduction}\mylabel{sec:intro}
Linear search is concerned with a searcher that wants to find a target located on the infinite real line. The {\em searcher} is an autonomous mobile agent, henceforth also referred to as the \textit{robot}, that can move on the line in any direction with max speed $1$. A second agent, henceforth also referred to as the {\em target}, is oblivious (in particular, it cannot change direction), starts at an unknown (to the searcher) location on the line, and moves with constant speed $v<1$ away from the origin. The goal is to design search algorithms that minimize the ratio of the target capture time (i.e., the first time when the searcher and target are co-located on the line) under the algorithm versus the minimum time required by an agent that knows the speed and starting position of the target.
We refer to this ratio as the \textit{competitive ratio}.
For many problems, it is possible to design an algorithm with a finite supremum competitive ratio amongst all problem instances. In our case, however, the competitive ratio grows arbitrarily large for certain instances; so, our goal will be to minimize the growth of the competitive ratio with respect to the target's initial parameters.

One would expect that the resulting competitive ratio of a search algorithm should depend not only on the moving target's initial parameters (speed $v$ and starting distance $d$), but also on whether those parameters are known to the searcher prior to the execution of the algorithm. It would be natural to speculate that the less the searcher knows about the target, the worse the resulting competitive ratio. Our main focus in this paper is to understand how the knowledge available to the searcher about $d$, $v$ affects the competitive ratio of linear search. More specifically, we ask: what is the competitive ratio of search if the target is mobile and its speed is fixed but \textit{unknown} to the searcher? This question was first considered in~\cite{coleman2023line}, where the authors study the resulting competitive ratios of search under the four possible knowledge scenarios of the pair $d, v$. The main contribution of the present paper is to give new algorithms and determine new upper and lower bounds for the competitive ratios when $v$ is unknown to the searcher, and thus
answer one of the main open problems left in~\cite{coleman2023line}.

\subsection{Notation and Terminology}\mylabel{sec:notation}
A robot with maximum speed $1$ starts at the origin $0$ on the real number line. A target starts a distance $d \geq 1$ from the origin in an unknown direction and moves away from the origin with an unknown speed $0\leq v < 1$. The goal is for the robot to catch the target in a manner that is as efficient as possible. With the knowledge that it has, the robot executes an algorithm until it becomes collocated with the target. For simplicity, we define the positive direction of the number line to be that in which the robot's position reaches magnitude $d$ first.

The only feedback that the robot receives during its search is either the instruction to keep searching or the instruction to terminate (once it is collocated with the target).
As such, its motion is predetermined by the initial parameters that are available to it.
We will refer to this predetermined pattern of motion as a ``strategy'' and define it as follows.

\begin{definition} \mylabel{def:strat}
    A strategy $\mathcal{A}$ is a function $\mathcal{A}(t)$ for $t\geq 0$ that denotes the robot's position on the number line at time $t$.
    $\mathcal{A}(t)$ is continuous, does not violate the speed limit of $1$, and $\mathcal{A}(0)=0$.
    A strategy $\mathcal{A}$ is ``successful'' if for any possible target, there exists some $t$ at which the robot is collocated with that target.
\end{definition}

Since one strategy must account for all problem instances,
it is useful to imagine the targets from all problem instances as existing simultaneously.
In such a paradigm, the goal of a strategy is to ``catch'' all possible targets (whereas only one is caught in reality).
The time taken to catch each potential target can be used to evaluate the efficiency of the strategy.
We will use the word ``reach'' to mean becoming collocated with a target whereas the word ``catch'' will mean reaching a target for the first time.

The fraction $u=\frac{1}{1-v}$ represents the so-called {\em evasiveness} of a target of speed $v$.
The evasiveness of a target is proportional to the minimum amount of time that it takes for the robot to close a gap between itself and the target. Note that $v = 1 - \frac1u$ and by the bounds of $v$, we have $1\leq u < \infty$.
From here on out, it is useful to frame the conversation in terms of evasiveness rather than speed.
Every variable denoted by the letter $u$ will represent an evasiveness and will implicitly be associated with a speed variable denoted by the letter $v$ with the same subscript or lack thereof.

The variable $\sigma \in \left\{0,1\right\}$ will be used to represent the positive side of the real line if $\sigma=0$ or the negative side if $\sigma=1$.
We define
$$
    T^\sigma_{\mathcal{A}}(u,d) = \inf \left\{t\geq 0 : \mathcal{A}(t)=(-1)^\sigma\left(d+\left(1-\frac{1}{u}\right)t\right)\right\}
$$
as the time taken by strategy $\mathcal{A}$ to catch $u$ on side $\sigma$ and $T_{opt}(u,d) = \frac{d}{1-v} = ud$ as the time taken by a robot running an offline algorithm to accomplish the same task. The competitive ratio is defined as $$CR^\sigma_{\mathcal{A}}(u,d) = \frac{T^\sigma_{\mathcal{A}}(u,d)}{T_{opt}(u,d)}\mbox{ and }
    CR_{\mathcal{A}}(u,d) = \max\left(CR^0_{\mathcal{A}}(u,d),CR^1_{\mathcal{A}}(u,d)\right) .
$$
In all cases, we will be using $CR_{\mathcal{A}}(u,d)$ to evaluate the performance of search strategies.
When the starting distance $d$ is known to the searcher (as is the case in Section~\myref{sec:no_speed}), we will suppress $d$ in the definitions above and simply write $T^\sigma_{\mathcal{A}}(u)$, $ T_{opt}(u)$, $CR^{\sigma}_{\mathcal{A}}(u)$, or $CR_{\mathcal{A}}(u)$, respectively to simplify the notation.

\subsection{Related work}\mylabel{sec:related}
The study of linear search for a static target was first proposed independently in \cite{beck1964linear,bellman1963optimal} for stochastic and game theoretic models, while deterministic linear search by a single mobile agent was investigated in~\cite{baezayates1993searching,baezayates1995parallel}.
There is extensive literature on linear search for both stochastic and deterministic search models. For additional information, the interested reader could consult the seminal books by Ahlswede and Wegener~\cite{ahlswede1987search} and Alpern and Gal~\cite{alpern03}.

The ultimate goal of linear search is to determine ``precise bounds'' on the competitive ratio of a specific search model. As such, the competitive ratio may play a primary role in determining the computational strength of a search model as well as its relation to other existing models. This is particularly evident when analyzing the impact of communication in various settings including multi-agent (group) search with or without faulty agents, different communication abilities, etc. For example, see the work of~\cite{DBLP:journals/dc/CzyzowiczKKNO19,kupavskii2018lower} for linear group search with crash faults, and~\cite{DBLP:journals/ijfcs/CzyzowiczGKKNOS21,killick2021ACDA,sun2020better} for the case of byzantine faults. Another example, is when two agents can communicate only in the F2F (Face-to-Face) model (see~\cite{chrobak2015group}) as well as \cite{bampas2019linear} which analyzes the competitive ratio of linear search for two robots with different speeds. It should also be noted that there is extensive literature when different cost models are considered, e.g., \cite{demaine2006online} which takes into account the number of turns.

Most of the known research on analyzing and determining the competitive ratio of linear search is concerned with a static target. Alpern and Gal~\cite{alpern03}[p. 134, Equation 8.25] were the first to analyze linear search for catching an escaping oblivious target and proved that the optimal competitive ratio of search for a moving target starting at an unknown starting distance and moving with known (to the searcher) speed $0 \leq v <1$ (where $1$ is the speed of the searcher) away from the origin is exactly $1 + 8\frac{1+v}{(1-v)^2}$. If both starting distance $d$ and speed $v$ of the mobile target are known to the searcher, the competitive ratio is shown in~\cite{coleman2023line} to be $1+\frac2{1-v}$. This last publication was also the first work to consider knowledge/competitive ratio tradeoffs in linear search for a moving target. In the present paper, we consider the general problem of the impact of knowing either $d$ or $v$ on the competitive ratio of search. In particular, we focus on determining the competitive ratio of search for a moving target when its speed is unknown to the searcher and solve one of the main open problems proposed in ~\cite{coleman2023line}.

\subsection{Contributions and Outline}\mylabel{sec:outline}
Recall that, throughout the paper, we assume that the speed $v$ of the mobile target is unknown to the searcher and $u=\frac1{1-v}$ is the evasiveness of $v$. The main contributions are as follows.

In Section~\myref{sec:no_speed}, we assume the starting distance $d$ of the mobile target is known to the searcher. In Subsection~\myref{sec:no_speed:lower}, we prove a lower bound in Theorem~\myref{thm:no_speed:lower}: that no search strategy $\mathcal{A}$ can satisfy $CR_{\mathcal{A}}(u) \in O\left(u^{4-\varepsilon}\right)$, for any constant $\varepsilon>0$ (improving the lower bound of $1+2u$ in~\cite{coleman2023line}). In Subsection~\myref{sec:no_speed:upper}, we define strategy $\mathcal{A}_1$ (see Algorithm~\myref{alg:no_speed:upper:sqrt_for_upbound}) and prove, in Theorem~\myref{thm:nospeed_upbound}, that it obeys $CR_{\mathcal{A}_1}(u) \leq 56.18u^{4-(\log_2 \log_2 u)^{-2}}$ for $u>4$, while $CR_{\mathcal{A}_1}(u)=9$ for $1 \leq u \leq 4$
(improving the previous upper bound of $1+16 u^4 (\log_2 u )^2$ from~\cite{coleman2023line} for all $u>4$).
Our results for Section \myref{sec:no_speed} are tight up to lower order terms in the exponent, thus answering one of the main open problems proposed in~\cite{coleman2023line}.

In Section~\myref{sec:no_knowledge}, we analyze the no-knowledge case whereby the searcher knows neither the speed $v$ nor the starting distance $d$ of the target. We define a strategy $\mathcal{A}_2$ (see Algorithm~\myref{alg:no_knowledge}) and prove, in Theorem~\myref{thm:noknow_upbound_cr2param}, that it obeys $CR_{\mathcal{A}_2}(u,d)\leq 1+\frac{1}{d}\left(56.18(ud)^{4-(\log_2 \log_2 (ud))^{-2}}-1\right)$ for $ud>4$, while $CR_{\mathcal{A}_2}(u,d)\leq 1+\frac 8d$ for $ud \leq 4$ (improving the previous best upper bound: $O \left(\frac1d M^8 \log_2^2 M \log_2 \log_2 M\right)$, where $M = \max\left(u,d\right)$, in~\cite{coleman2023line}). We also note that the lower bound in the case where $d$ is known extends trivially to the case where $d$ is unknown, yielding that no strategy $\mathcal{A}$ could satisfy $CR_{\mathcal{A}}(u,d) \in O\left(u^{4-\varepsilon}\right)$, for any constant $\varepsilon>0$.
We conclude in Section~\myref{sec:conclusion} by summarizing our contributions and proposing some open problems.
\begin{mainonly}
A full version of the paper with all proofs omitted due to space constraints can be found in the full version of the paper~\cite{arxiv_version}.
\end{mainonly}

\section{Unknown Speed and Known Starting Distance}
\mylabel{sec:no_speed}
In this section, we consider search when the speed $0 \leq v <1$ of the target is unknown but the starting distance $d$ is known to the searcher. First, we prove a lower bound in Subsection~\myref{sec:no_speed:lower}, followed by an upper bound in Subsection~\myref{sec:no_speed:upper}.

\subsection{Lower Bound}\mylabel{sec:no_speed:lower}

In Definition~\myref{def:strat}, we defined the general concept of a search strategy. Next, we define a particular class of search strategies: so-called zigzag strategies.
\begin{definition} \mylabel{def:zigzag}
    A zigzag strategy $\mathcal{A}$ is defined in terms of a sequence of evasiveness values: $\left(u_i\right)_{i=0}^{\infty}$. This sequence must satisfy the additional stipulations that $u_{i+2} > u_i$ for all $i\geq 0$ and that $\lim_{i \to \infty} u_i = \infty$. The movement of the robot is divided into rounds with indices $i\geq 0$. On round $i$, the robot travels from the origin to position $(-1)^ix_i$ and back to the origin where $x_i$ is the distance that is just large enough to catch the target with evasiveness $u_i$. For future convenience, let $s_i = \sum_{n=0}^{i}x_n$ for all $i\geq -1$. To be clear, $s_{-1}=0$.
\end{definition}

For any successful strategy, there exists a zigzag strategy that performs just as well or better. This is because it is optimal for the robot to always be travelling to the next new target that it intends to catch at top speed (see~\cite{alpern03}). For this reason, we will only be considering zigzag strategies as they are the best-performing class of strategies. This subsection is dedicated to proving the following theorem.

\begin{theorem} \mylabel{thm:no_speed:lower}
    In the case where initial distance $d$ is known but speed $v$ is unknown, no strategy $\mathcal{A}$ can satisfy $CR_{\mathcal{A}}(u) \in O\left(u^{4-\varepsilon} \right)$, for any constant $\varepsilon > 0$.
\end{theorem}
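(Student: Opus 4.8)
Since, as noted above, it suffices to consider zigzag strategies, fix one with evasiveness sequence $(u_i)_{i\ge 0}$. The plan is to reduce the hypothetical bound $CR_{\mathcal{A}}(u)\in O(u^{4-\varepsilon})$ to a self-defeating linear recurrence inequality for $\lambda_i:=\log_2 u_i$.

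First I would set up the kinematics. With $s_{-1}=0$, $s_i=\sum_{n=0}^i x_n$ and $S_i:=d+2s_i$, the condition that $x_i$ is ``just large enough to catch $u_i$'' means the robot meets that target on its outward leg of round $i$, which gives $x_i=(u_i-1)S_{i-1}+d$ and hence the recurrence $S_i=(2u_i-1)S_{i-1}+2d$, $S_{-1}=d$. The same computation shows a target caught in round $i$ on its own side is reached at time $uS_{i-1}$, so it witnesses competitive ratio exactly $S_{i-1}/d$. The key structural fact is a ``two-rounds-later'' penalty: a target whose evasiveness is just above $u_i$ and which sits on side $i\bmod 2$ is not caught in round $i$ (the robot stops infinitesimally short), and — because rounds of the other parity never touch its side — it must wait until round $i+2$, where it is indeed caught since $u_{i+2}>u_i$. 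Hence $CR_{\mathcal{A}}(u)\ge S_{i+1}/d$ for every $u$ slightly above $u_i$, and letting $u\downarrow u_i$ under the standing hypothesis (which, WLOG, we may take with $0<\varepsilon\le 1$) yields $S_{i+1}\le C\,d\,u_i^{4-\varepsilon}$ for all $i\ge n_0$.

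Next I would combine this with the recurrence. From $S_j=(2u_j-1)S_{j-1}+2d\ge u_jS_{j-1}$ we get $S_{i+1}\ge u_{i+1}u_iS_{i-1}\ge u_{i+1}u_i\big(d\prod_{n=n_0}^{i-1}u_n\big)$, so $u_{i+1}\prod_{n=n_0}^{i-1}u_n\le C u_i^{3-\varepsilon}$; taking $\log_2$ and writing $T_i=\sum_{n=n_0}^i\lambda_n$ (strictly increasing with $T_i\to\infty$, since $u_n\to\infty$), this is
$$\lambda_{i+1}+T_{i-1}\le(3-\varepsilon)\lambda_i+K\quad\Longleftrightarrow\quad T_{i+1}\le(4-\varepsilon)(T_i-T_{i-1})+K,\qquad K=\log_2 C .$$
To extract a contradiction I would set $W_i=T_i-K$ (positive and increasing for large $i$) and $q_i=W_i/W_{i-1}>1$, turning the inequality into $q_{i+1}\le(4-\varepsilon)(1-1/q_i)=:\phi(q_i)$. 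The point where the exponent $4$ enters is that $q^2-(4-\varepsilon)q+(4-\varepsilon)$ has discriminant $-\varepsilon(4-\varepsilon)<0$ for every $\varepsilon>0$: thus $\phi$ has no fixed point, $\phi(q)<4-\varepsilon$ always, and $q-\phi(q)\ge\varepsilon/4$ whenever $1<q<4-\varepsilon$. So once $i$ is large, $q_i\in(1,4-\varepsilon)$ and $q_{i+1}\le q_i-\varepsilon/4$, which forces $q_i<1$ after finitely many steps — contradicting $q_i>1$.

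I expect two parts to be delicate. The first is getting the binding instance right: because the two sides are searched in alternating rounds and the subsequences $(u_{2k})$ and $(u_{2k+1})$ may be interleaved arbitrarily, one should phrase everything in terms of ``rounds of a fixed parity'' and use the targets with evasiveness just above $u_i$ (which produce $S_{i+1}$, not $S_{i-1}$, in the constraint); careless comparisons between $u_i$ and $u_{i\pm1}$ will not close the argument. The second, and the real crux, is converting $\lambda_{i+1}+T_{i-1}\le(3-\varepsilon)\lambda_i+K$ into an actual contradiction: no single exponential growth rate for $\lambda_i$ is ruled out in isolation, so one must argue through the ratio map $\phi$ and its (non-)fixed-point structure, which is exactly where the threshold $4$ — and its tightness against the $u^{4-o(1)}$ upper bound — comes from.
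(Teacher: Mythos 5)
Your proposal is correct and reaches the same threshold $4$, but via a genuinely different route in the crucial final step. The first half --- the reduction to zigzag strategies, the recurrence $S_i=(2u_i-1)S_{i-1}+2d\ge u_iS_{i-1}$, the ``two-rounds-later'' penalty $CR_{\mathcal{A}}(u)\ge S_{i+1}/d$ for $u\in(u_i,u_{i+2}]$, and the resulting linear inequality $T_{i+1}\le(4-\varepsilon)(T_i-T_{i-1})+K$ in the logarithms --- mirrors the paper's Lemmas~\myref{lem:no_speed:lower:cru}--\myref{lem:no_speed:lower:li_uprod} almost verbatim (the paper tests a concrete evasiveness $\alpha_iu_i$ rather than passing to the limit $u\downarrow u_i$, and absorbs $K$ by inflating the exponent to $\tfrac{1}{2}(k+4)$ rather than shifting $W_i\mapsto W_i-K$, but these are cosmetic). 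The real divergence is in extracting a contradiction from $W_{i+1}\le A(W_i-W_{i-1})$ with $A<4$. The paper's Lemma~\myref{lem:no_speed:lower:k_condition}, adapting Beck and Newman, substitutes $y_i=2^{-i}W_i$, obtains the second-difference bound $y_{i+2}-2y_{i+1}+y_i\le-\gamma\,y_{i+2}$, and argues qualitatively that a positive, eventually concave-down sequence with second differences bounded away from $0$ cannot persist. You instead track the ratio $q_i=W_i/W_{i-1}>1$, note $q_{i+1}\le\phi(q_i)=A(1-1/q_i)$, and observe that $q^2-Aq+A$ has negative discriminant for $A<4$, so $\phi$ has no fixed point and $q-\phi(q)\ge\varepsilon/4$ on $(1,A)$; the ratio then falls below $1$ in finitely many steps, contradicting $q_i>1$. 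Both arguments are ultimately about the characteristic polynomial of the borderline recurrence $W_{i+1}=4(W_i-W_{i-1})$ acquiring a repeated root at $2$; yours makes the role of the discriminant explicit and yields a quantitative contraction rate $\varepsilon/4$ per step, whereas the paper's is a softer limiting argument inherited from the cow-path literature. Two small points you should state explicitly if you write this up: (i)~the base round $n_0$ must be chosen so that both $u_{n_0}$ and $u_{n_0+1}$ exceed the constant beyond which the $O(\cdot)$ hypothesis applies --- since only $u_{i+2}>u_i$ is guaranteed, picking a single large index does not suffice; and (ii)~$q_i>1$ (equivalently $\lambda_i>0$) holds only for large $i$, because the zigzag definition guarantees $u_i\to\infty$ but not $u_{i+1}>u_i$ --- which is all your descent argument needs.
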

\begin{proof}
    This proof will show that a contradiction arises from the assumption that $CR_{\mathcal{A}}(u) \in O\left(u^k\right)$ for an arbitrary zigzag strategy $\mathcal{A}$ and for an arbitrary constant $k<4$.
    Part of the proof borrows work done by Beck and Newman \cite{beck1970linearsearch} regarding the cow path problem.
    Here is an outline of the proof:
    \begin{enumerate}
        \item Lemmas \myref{lem:no_speed:lower:cru} through \myref{lem:no_speed:lower:li_uprod} will give a lower bound for $CR_{\mathcal{A}}(u)$ in terms of $u_i$ for various $i$.
        \item This bound will be distilled into a key condition, but only under the assumption that $CR_{\mathcal{A}}(u) \in O\left(u^k\right)$ for $k<4$.
        \item Lemma~\myref{lem:no_speed:lower:k_condition} will show that this condition is impossible to satisfy.
    \end{enumerate}
    Note that Lemma~\myref{lem:si_recurrence} will also be useful in later proofs.

    \begin{lemma} \mylabel{lem:no_speed:lower:cru}
        For some zigzag strategy $\mathcal{A}$, for any round number $i\geq 0$, and for any evasiveness value $u\in (u_i,u_{i+2}]$, $CR_{\mathcal{A}}(u)\geq 1+\frac{2}{d}s_{i+1}$.
    \end{lemma}
    \begin{proof}
        Consider a round $i\geq 0$ with parity $\sigma$.
        During round $i+2$, all $u\in (u_i,u_{i+2}]$ on side $\sigma$ are caught (for the first time).
        At the start of round $i+2$, the robot is at the origin but has spent $2s_{i+1}$ time on previous rounds, giving all targets an additional $2vs_{i+1}$ distance head start.
        The target with evasiveness $u$ is caught in time
        \begin{align*}
            T^{\sigma}_{\mathcal{A}}(u) & =\; 2s_{i+1} + \frac{d+2vs_{i+1}}{1-v}
            =\; \frac{d+2vs_{i+1}+2s_{i+1}(1-v)}{1-v}
            =\; ud+2us_{i+1}.
        \end{align*}
        It is true that
        $
            T^{\sigma}_{\mathcal{A}}(u) \leq \max\left(T^0_{\mathcal{A}}(u), T^1_{\mathcal{A}}(u)\right)
        $.
        Hence, by definition,
        $
            CR_{\mathcal{A}}(u) \geq \frac{T^{\sigma}_{\mathcal{A}}(u)}{T_{opt}(u)}\nonumber\\
            = \frac{ud+2us_{i+1}}{ud}
            = 1+\frac{2}{d}s_{i+1}
        $    .
        This completes the proof of Lemma~\myref{lem:no_speed:lower:cru}. ~\qed
    \end{proof}

    \begin{lemma}\mylabel{lem:si_recurrence}
        For any zigzag strategy $\mathcal{A}$, for all $i\geq 0$, $s_i = u_id+\left(2u_i-1\right)s_{i-1}$.
    \end{lemma}
    \begin{proof}
        At the start of round $i$, the robot is at the origin and must catch evasiveness values up to $u_i$.
        The robot has spent $2s_{i-1}$ time on previous rounds.
        Therefore, the target with evasiveness $u_i$ is a distance $d+2v_is_{i-1}$ away from the robot at the start of round $i$.
        Catching this target will take an amount of time (and distance) equal to
        \begin{align}
            x_i & = u_i\left(d+2v_is_{i-1}\right)
            = u_i\left(d+2\left(1-\frac{1}{u_i}\right)s_{i-1}\right)
            = u_id+(2u_i-2)s_{i-1}. \nonumber
        \end{align}
        By definition,
        $        s_i = x_i+s_{i-1}
            = u_id+(2u_i-2)s_{i-1}+s_{i-1}
            = u_id+(2u_i-1)s_{i-1}.
        $
        This proves Lemma~\myref{lem:si_recurrence}.~\qed
    \end{proof}

    \begin{lemma}\mylabel{lem:no_speed:lower:li_uprod}
        For any zigzag strategy $\mathcal{A}$, for all $i\geq 0$, $s_i \geq d\prod_{n=0}^{i}u_n$.
    \end{lemma}
    \begin{proof}
        Lemma~\myref{lem:si_recurrence} states that $s_i = u_id+\left(2u_i-1\right)s_{i-1}$.
        Since $u_i\geq 1$,
        $$
            s_i = u_id+(2u_i-1)s_{i-1}
            \geq u_id+(2u_i-u_i)s_{i-1}
            = u_id+u_is_{i-1}
            \geq u_is_{i-1}.
        $$
        In summary, for all $i\geq 0$, $s_i \geq u_is_{i-1}$.
        Using $s_0=u_0d$ as the base case, we can unwrap the recurrence to yield $s_i \geq d\prod_{n=0}^{i}u_n$.
        This completes the proof of Lemma~\myref{lem:no_speed:lower:li_uprod}. ~\qed
    \end{proof}

    Now, for the sake of contradiction, assume that there exists a zigzag strategy $\mathcal{A}$ satisfying $CR_{\mathcal{A}}(u)\in O\left(u^k\right)$ for some $k<4$.
    More specifically, assume that there exists some $a>0$ such that for all $u \geq c_1$ for some constant $c_1$,
    \begin{equation}
        CR_{\mathcal{A}}(u) \leq au^k . \mylabel{eq:verbose_bigo}
    \end{equation}
    We now examine a subset of all $u$ values on which this condition must hold.
    For each round numbered $i\geq 0$, let us select an evasiveness $u=\alpha_iu_i$ where
    \begin{equation}
        \alpha_i = \min\left(\frac{u_{i+2}}{u_i},\sqrt[k]{2}\right) . \mylabel{eq:alpha_coeff}
    \end{equation}
    Note that $u_i<u_{i+2}$ by definition and that we need only consider $k>0$.
    With this in mind, one can verify that $\alpha_iu_i$ satisfies the following properties:
    \begin{align}
        u_i < \alpha_iu_i \leq u_{i+2} , \mylabel{eq:alpha_range_cond} \\
        \left(\alpha_iu_i\right)^k \leq 2u_i^k . \mylabel{eq:alpha_scale_cond}
    \end{align}
    Lemma~\myref{lem:no_speed:lower:cru} in conjunction with Equation \myeqref{eq:alpha_range_cond} yields:
    $
        CR_{\mathcal{A}}\left(\alpha_iu_i\right) \geq 1+\frac{2}{d}s_{i+1}
    $.
    Lemma~\myref{lem:no_speed:lower:li_uprod}, Equation \myeqref{eq:verbose_bigo}, and Equation \myeqref{eq:alpha_scale_cond} yield:
    \begin{equation*}
        1+2\prod_{n=0}^{i+1}u_n \;\leq\;
        1+\frac{2}{d}s_{i+1} \;\leq\;
        CR_{\mathcal{A}}\left(\alpha_iu_i\right) \;\leq\;
        a\left(\alpha_iu_i\right)^k \;\leq\;
        2au_i^k .
    \end{equation*}
    Now, let $u_i=2^{w_i}$ where $w_i\geq 0$.
    Note that since $u_i$ goes to infinity by the definition of a zigzag strategy, so does $w_i$. Hence, $2a\cdot 2^{kw_i} \geq 1+2\prod_{n=0}^{i+1}2^{w_n} \geq 2\prod_{n=0}^{i+1}2^{w_n}$.
    Dividing by $2$ and applying $\log_2$ to both sides produces $kw_i+\log_2(a) \geq \sum_{n=0}^{i+1}w_n$.
    We can remove $\log_2(a)$ by swapping $k$ with $k_2=\frac{1}{2}(k+4)$, which is bigger than $k$ but still less than $4$.
    Since $\log_2(a)$ is a constant and $w_i$ goes to infinity, there will necessarily be some round $j$ such that for all $i\geq j$,
    $kw_i+\log_2(a) \leq \frac{1}{2}(k+4)w_i$.
    Alternatively, for all $i\geq j$, $w_i\geq \frac{2\log_2(a)}{4-k}$.
    Let $c_2 = u_j$.
    We now have that for all $i$ satisfying $u_i \geq \max(c_2, c_1)$,
    \begin{equation}
        k_2w_i \geq \sum_{n=0}^{i+1}w_n .\mylabel{eq:no_speed:lower:main_cond}
    \end{equation}
    We will now show that Equation \myeqref{eq:no_speed:lower:main_cond} cannot hold using Lemma~\myref{lem:no_speed:lower:k_condition}, which adapts part of Lemma~2 from \cite{beck1970linearsearch} to better suit the task at hand.
    \begin{lemma} \mylabel{lem:no_speed:lower:k_condition}
        Consider a sequence $\left(z_i\right)_{i=0}^{\infty}$ where $z_i>0$ for all $i\geq 2$ and $z_i\geq 0$ for $i=0,1$.
        For $h<4$, no such sequence is able to satisfy the following condition for all $i$ larger than a sufficiently large $m$:
        \begin{equation}
            hz_i \geq \sum_{n=0}^{i+1}z_n .\mylabel{eq:no_speed:lower:k_cond}
        \end{equation}
    \end{lemma}
    \begin{proof}
        For the sake of contradiction, assume that a sequence $\left(z_i\right)_{i=0}^{\infty}$ can satisfy the above conditions.
        Consider a modified sequence $\left(y_i\right)_{i=0}^{\infty}$ given by
        \begin{equation}
            y_i = 2^{-i}\sum_{n=0}^{i}z_n .\mylabel{eq:modified_y_sequence}
        \end{equation}
        The conditions of the lemma ensure that $y_i>0$ for all $i\geq 2$.
        We can derive a useful bound for the second difference of this sequence for all $i\geq m$.
        \begin{align}
            2^{i+2}(y_i+y_{i+2}) & = 4\sum_{n=0}^{i}z_n + \sum_{n=0}^{i+2}z_n \nonumber                                                                                  \\
                                 & \leq 4\sum_{n=0}^{i}z_n + hz_{i+1}                                  &  & \text{(Equation \myeqref{eq:no_speed:lower:k_cond})}\nonumber  \\
                                 & = 4\sum_{n=0}^{i+1}z_n - (4-h)z_{i+1} \nonumber                                                                                       \\
                                 & = 2^{i+3}y_{i+1} - (4-h)z_{i+1}                                     &  & \text{(Equation \myeqref{eq:modified_y_sequence})} \nonumber   \\
                                 & \leq 2^{i+3}y_{i+1} - \left(\frac{4-h}{h}\right)\sum_{n=0}^{i+2}z_n &  & \text{(Equation \myeqref{eq:no_speed:lower:k_cond})} \nonumber \\
                                 & = 2^{i+3}y_{i+1} - 2^{i+2}\left(\frac{4-h}{h}\right)y_{i+2} .       &  & \text{(Equation \myeqref{eq:modified_y_sequence})} \nonumber
        \end{align}
        Let $\gamma = \frac{4-h}{h} > 0$. Observe that
        \begin{align}
            2^{i+2}(y_i+y_{i+2}) \;\;\; & \leq\;\;\; 2^{i+3}y_{i+1} - 2^{i+2}\gamma \cdot y_{i+2}, \nonumber         \\
            y_{i+2}-2y_{i+1}+y_i \;\;\; & \leq\;\;\;  - \gamma \cdot y_{i+2}. \mylabel{eq:no_speed:lower:final_concav}
        \end{align}
        Hence, for any $i\geq \max(m,2)$, the decrease in slope that occurs between the indices $i$ and $i+2$ in $\left(y_i\right)_{i=0}^{\infty}$ has a magnitude no less than $\gamma \cdot y_{i+2}>0$.

        A sequence that is both positive and concave down can never decrease.
        If it ever did, then the slope could never increase again and the sequence would eventually cross the $y$ axis, contradicting its positivity.
        This means that the second difference of $\left(y_i\right)_{i=0}^{\infty}$ must tend to $0$, lest the slope decrease to arbitrarily low values.
        According to Equation \myeqref{eq:no_speed:lower:final_concav}, this can only happen if $y_i$ itself tends to $0$.
        However, being a positive sequence, it must decrease at some point in order to tend to $0$.
        Hence, the existence of a sequence $\left(y_i\right)_{i=0}^{\infty}$ satisfying the above conditions is contradictory, proving Lemma~\myref{lem:no_speed:lower:k_condition}. ~\qed
    \end{proof}

    Lemma~\myref{lem:no_speed:lower:k_condition} shows that Equation \myeqref{eq:no_speed:lower:main_cond} is contradictory:
    take $z_i=w_i$, $m=\max(c_1, c_2)$, and $h=k_2<4$.
    Note that $w_i>w_{i-2}\geq 0$ for all $i\geq 2$.
    The only unfounded assumption made in the derivation of Equation \myeqref{eq:no_speed:lower:main_cond} was that $CR_{\mathcal{A}}(u)\in O\left(u^k\right)$ for $k<4$.
    Therefore, this assumption is false.
    This completes the proof of Theorem \myref{thm:no_speed:lower}. ~\qed
\end{proof}

\subsection{Upper Bound}\mylabel{sec:no_speed:upper}
In this subsection, we present an algorithm
and prove an upper bound on its performance in Theorem \myref{thm:nospeed_upbound}.
Algorithm~\myref{alg:no_speed:upper:sqrt_for_upbound} follows a zigzag strategy $\mathcal{A}$ given by $u_i = 2^{3\cdot 2^i\sqrt{i+1}-1}$.

\begin{algorithm}[H]
    \caption{\mylabel{alg:no_speed:upper:sqrt_for_upbound}(Search with unknown speed and known initial distance)}
    \begin{algorithmic}[1]
        \State \textbf{input:} initial target distance $d$
        \State terminate at any point if collocated with target
        \State $t \gets 0$
        \State \textbf{for} $i \gets 0,1,2, ..., \infty$ \textbf{do} \Indent
        \State $u_i \gets 2^{3\cdot 2^i\sqrt{i+1}-1}$
        \State $v_i \gets 1 - u_i^{-1}$
        \State $x_i \gets u_i\left(d + v_it\right)$
        \State move to position $(-1)^ix_i$ and back to the origin
        \State $t \gets t + 2x_i$ \EndIndent
        \State \textbf{endfor}
    \end{algorithmic}
\end{algorithm}

\begin{theorem} \mylabel{thm:nospeed_upbound}
    In the case where initial distance $d$ is known but speed $v$ is unknown, Algorithm~\myref{alg:no_speed:upper:sqrt_for_upbound} has the following bounds on its performance:
    \begin{align}
        \forall \; 1\leq u\leq 4 &  & CR_{\mathcal{A}}(u) & = 9 \mylabel{eq:no_speed:upper:upboundthm_early}                                   \\
        \forall \; u > 4         &  & CR_{\mathcal{A}}(u) & \leq 56.18u^{4-(\log_2 \log_2 u)^{-2}} \mylabel{eq:no_speed:upper:upboundthm_late}
    \end{align}
\end{theorem}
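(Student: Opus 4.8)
The plan is to dispatch the low‑evasiveness band $1\le u\le 4$ by a direct computation, and for $u>4$ to reduce the bound to a single scalar inequality in the round index $i$.

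\smallskip\noindent\emph{The band $1\le u\le 4$.} Here $u\le u_0=4$. On round $0$ the robot's outward leg has length $x_0=u_0d=4d$, which is at least the distance $ud$ at which a side‑$0$ target is met; hence that target is caught on the outward leg, so $T^0_{\mathcal{A}}(u)=ud=T_{opt}(u)$ and $CR^0_{\mathcal{A}}(u)=1$. A side‑$1$ target is first met on round $1$: the robot has spent $2x_0=8d$ on round $0$ and then closes the gap $d+8vd$ at rate $1-v$, which takes $ud(1+8v)$; using $uv=u-1$, $T^1_{\mathcal{A}}(u)=8d+ud(1+8v)=9ud=9\,T_{opt}(u)$, so $CR^1_{\mathcal{A}}(u)=9$ identically. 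Thus $CR_{\mathcal{A}}(u)=\max(1,9)=9$, which is \myeqref{eq:no_speed:upper:upboundthm_early}.

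\smallskip\noindent\emph{Reduction for $u>4$.} The exponent $3\cdot 2^i\sqrt{i+1}$ is strictly increasing in $i$, so $(u_i)_{i\ge0}$ is strictly increasing with $u_0=4$, and every $u>4$ lies in a unique window $(u_i,u_{i+1}]$. On that window the rounds $i,i-2,\dots$ cannot catch $u$ (they reach evasiveness $\le u_i<u$), whereas round $i+1$ catches $u$ on one side and round $i+2$ on the other (by construction $x_j$ is exactly the time/distance needed to catch evasiveness $u_j$ at the start of round $j$, hence more than enough for any $u\le u_j$); substituting and using Lemma~\myref{lem:si_recurrence} gives competitive ratios $1+\tfrac2d s_i$ and $1+\tfrac2d s_{i+1}$ on the two sides, so $CR_{\mathcal{A}}(u)=1+\tfrac2d s_{i+1}=:R_{i+1}$. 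With $R_i:=1+\tfrac2d s_i$, Lemma~\myref{lem:si_recurrence} becomes $R_{i+1}=2+(2u_{i+1}-1)R_i$, $R_0=1+2u_0=9$, whence (since $R_i\ge9\ge2$) $R_{i+1}\le\tfrac94\,2^{i+1}\prod_{n=0}^{i+1}u_n$. For $u\in(u_i,u_{i+1}]$ one has $4-(\log_2\log_2 u)^{-2}\ge 4-(\log_2 e_i)^{-2}\ge3>0$ (writing $e_i:=\log_2 u_i=3\cdot 2^i\sqrt{i+1}-1$), so $56.18\,u^{4-(\log_2\log_2 u)^{-2}}\ge 56.18\,u_i^{4-(\log_2 e_i)^{-2}}$, and it is enough to prove $R_{i+1}\le 56.18\,u_i^{4-(\log_2 e_i)^{-2}}$ for all $i\ge0$. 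Taking $\log_2$ and inserting the bound on $R_{i+1}$, this is the scalar inequality $3\sum_{n=0}^{i+1}2^n\sqrt{n+1}+\log_2\tfrac98\le\log_2 56.18+4e_i-\tfrac{e_i}{(\log_2 e_i)^2}$, and telescoping $\sum_{n=1}^{i}2^n(\sqrt{n+1}-\sqrt n)$ against $\sum_n 2^n\sqrt{n+1}$ (together with the partial‑sum recurrence) rewrites it as
\[
 \frac{e_i}{(\log_2 e_i)^2}+\frac{6\cdot 2^i}{\sqrt{i+2}+\sqrt{i+1}}\ \le\ 3\sum_{n=1}^{i}\frac{2^n}{\sqrt{n+1}+\sqrt n}\ +\ c,\qquad c=\log_2(8\cdot 56.18/9)-1 .
\]

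\smallskip\noindent\emph{The crux.} This last inequality is the only non‑routine step, and it is where the choice $u_i=2^{3\cdot 2^i\sqrt{i+1}-1}$ is used. I would verify it for a bounded range of small $i$ by direct estimation (these determine the constant $56.18$, since for small $i$ there is essentially no slack), and for all larger $i$ by the following asymptotics: the right‑hand sum and the second left‑hand term are each $\Theta(2^i/\sqrt i)$, with difference $\Theta(2^i/i^{3/2})$ (using $\sqrt{i+2}-\sqrt{i+1}\sim\tfrac1{2\sqrt i}$ and the geometric estimate $\sum_{n\le i}2^n/(\sqrt{n+1}+\sqrt n)=(2+O(1/i))\cdot 2^i/(\sqrt{i+1}+\sqrt i)$), while the first left‑hand term is $\Theta\!\big(2^i\sqrt i/(\log_2 e_i)^2\big)$; since $e_i>2^i$ forces $\log_2 e_i>i$ (indeed $\log_2 e_i=i+\log_2(3\sqrt{i+1})-o(1)$), this term is $O(2^i/i^{3/2})$ and, comparing constants, is beaten by the right‑hand gap with room that grows like $2^i\log i/i^{5/2}$. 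The obstacle is precisely that the whole theorem lives in this second‑order term: the leading $u^4$ contribution to the competitive ratio cancels exactly (which is what the $3\cdot 2^i\sqrt{i+1}$ scaling is tuned for), so the elementary estimates on $\sqrt{i+2}-\sqrt{i+1}$, on the partial sums $\sum 2^n/(\sqrt{n+1}+\sqrt n)$, and on $\log_2 e_i$ versus $i$ all have to be pushed far enough to yield both the constant $56.18$ and the exact exponent $4-(\log_2\log_2 u)^{-2}$ of \myeqref{eq:no_speed:upper:upboundthm_late}.
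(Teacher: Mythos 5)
Your handling of the band $1\le u\le 4$ matches the paper's: $CR^0_{\mathcal A}(u)=1$ on the outward leg of round $0$, $CR^1_{\mathcal A}(u)=9$, so $CR_{\mathcal A}(u)=9$. Your reduction for $u>4$ is also correct, and the route is genuinely different from the paper's. You observe directly that $CR_{\mathcal A}(u)=1+\tfrac2d s_{i+1}$ on the window $(u_i,u_{i+1}]$ (rather than invoking the endpoint lemma, Lemma~\myref{lem:no_speed:upper:only_key_pts}), turn Lemma~\myref{lem:si_recurrence} into the clean recurrence $R_{i+1}=2+(2u_{i+1}-1)R_i$ with $R_0=9$, deduce $R_{i+1}\le\tfrac94\cdot2^{i+1}\prod_{n=0}^{i+1}u_n$ (a touch tighter than what Lemma~\myref{lem:sum_leq_prod} would give), and telescope $\sum 2^n\sqrt{n+1}$ once, landing on a single scalar inequality in $i$. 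The paper instead applies Abel summation three times (Lemma~\myref{lem:sum_decomp}), sandwiches the finite differences of $\sqrt{\cdot}$ between derivatives at shifted arguments (Lemmas~\myref{lem:no_speed:upper:diff_deriv_positivity} and~\myref{lem:no_speed:upper:diff_bounds}), and minimizes a residual $\Phi(i)$ over all integers $i\ge1$ before reading off $h(u_i)\le(i+1)^{-2}$. Your one-step telescoping is more elementary and concentrates the entire proof into one explicit inequality, which is an attractive reformulation.

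The problem is that the inequality is not actually proved. You give an asymptotic sketch — both sides $\sim 3\cdot2^i/i^{3/2}$ with a gap of order $2^i\log i/i^{5/2}$ — and say you ``would verify'' a finite prefix of small $i$. As you yourself note, the leading contribution cancels exactly and the theorem lives in the second-order remainder, so ``comparing constants'' is not a step that can be waved at. Concretely, you would need: explicit two-sided bounds (with tracked constants, not $O(\cdot)$) on $\sum_{n\le i}2^n/(\sqrt{n+1}+\sqrt n)$ and on $\log_2 e_i$ valid from some explicit $i_0$ onward; the actual verification for $i<i_0$ (the $i=0$ case is already close, at about $4.49\le 4.64$, so the constant $56.18$ is doing real work there); and an overlap between the two ranges. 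None of that is carried out. The paper's Lemma~\myref{lem:no_speed:upper:diff_bounds} together with the minimization of $\Phi(i)$ is exactly the machinery that closes this gap rigorously; your proposal needs an equivalent, and that is the substantive content of the theorem, not a routine check.
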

\begin{mainonly}
The full proof of Theorem \myref{thm:nospeed_upbound} can be found in the full version of the paper~\cite{arxiv_version}, though an outline is presented here. The proof relies on the six lemmas \myref{lem:no_speed:lower:cru_value} through \myref{lem:no_speed:upper:diff_bounds}. Lemma \myref{lem:no_speed:upper:only_key_pts} is of particular importance.
\end{mainonly}
\begin{appendixonly}
Before proceeding with the proof of Theorem \ref{thm:nospeed_upbound}, we prove the six lemmas \ref{lem:no_speed:lower:cru_value} through \ref{lem:no_speed:upper:diff_bounds}.
Lemma~\ref{lem:no_speed:upper:diff_deriv_positivity} only helps to prove Lemma~\ref{lem:no_speed:upper:diff_bounds} and does not appear in the rest of the proof.
\end{appendixonly}
\begin{lemma} \mylabel{lem:no_speed:lower:cru_value}
    For any zigzag strategy $\mathcal{A}$, $CR^\sigma_{\mathcal{A}}(u) = 1+\frac{2}{d}s_{k-1}$, where $k$ is the round on which the target with evasiveness $u$ on side $\sigma$ is caught.
\end{lemma}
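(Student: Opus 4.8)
The plan is to track exactly which potential target a zigzag strategy catches on each round and compute how long this takes. On round $k$, the robot departs the origin after having spent $2s_{k-1}$ time on the previous $k$ rounds, then travels out to position $(-1)^k x_k$ and back. So I would first argue that the target with evasiveness $u$ on side $\sigma$ is caught precisely during the \emph{outbound} leg of some round $k$ whose parity matches $\sigma$: by Definition~\myref{def:zigzag}, round $k$ is designed so that $x_k$ is just large enough to reach evasiveness $u_k$, hence every target with evasiveness in the appropriate interval up to $u_k$ that has not yet been caught is caught on the outbound leg of round $k$. The key quantitative input is the identity already derived inside the proof of Lemma~\myref{lem:no_speed:lower:cru}: a target caught on the outbound leg of round $k$ (with matching parity) is caught at time $T^\sigma_{\mathcal{A}}(u) = 2s_{k-1} + \frac{d + 2v s_{k-1}}{1-v} = ud + 2u s_{k-1}$, where I use $2s_{k-1}$ for the time already elapsed and $\frac{d+2vs_{k-1}}{1-v}$ for the time to close the (now-enlarged) gap.

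Given that time expression, the competitive ratio on that side is immediate: $CR^\sigma_{\mathcal{A}}(u) = \frac{T^\sigma_{\mathcal{A}}(u)}{T_{opt}(u)} = \frac{ud + 2u s_{k-1}}{ud} = 1 + \frac{2}{d} s_{k-1}$, which is exactly the claim. So the skeleton of the proof is: (i) identify the round $k$ on which $u$ on side $\sigma$ is first caught, noting $k$ has parity $\sigma$; (ii) observe the catch happens on the outbound portion, so the elapsed-time bookkeeping $T^\sigma_{\mathcal{A}}(u) = 2s_{k-1} + u(d + 2 v s_{k-1})$ holds with equality (this is the same algebra as in Lemma~\myref{lem:no_speed:lower:cru}, now applied to the actual catching round rather than a round that merely provides a lower bound); (iii) divide by $T_{opt}(u) = ud$ and simplify.

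The one subtlety — and the main thing to get right — is step (i)/(ii): arguing cleanly that when $u$ is \emph{caught} (first reached), it is caught on the outbound leg of a round of the correct parity, so that the displacement budget is exactly $s_{k-1} + (\text{distance travelled out on round } k)$ and the target is at distance $d + 2 v s_{k-1}$ plus whatever extra it drifts during the outbound leg. One should note that $T^\sigma_{\mathcal{A}}(u)$ as defined is an infimum over all times the robot is collocated with the side-$\sigma$ target; since earlier rounds of parity $\sigma$ had $x_i$ too small to reach this $u$ (that is the meaning of ``$k$ is the round on which $u$ is caught''), the infimum is attained on round $k$, and the robot meets the target while moving away from the origin, so the gap-closing computation applies verbatim. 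Everything else is the routine algebra already carried out in Lemma~\myref{lem:no_speed:lower:cru}. Hence the result follows. \qed
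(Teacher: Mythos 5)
Your proposal is correct and matches the paper's own proof: both compute $T^\sigma_{\mathcal{A}}(u)$ by noting the robot starts round $k$ at the origin after $2s_{k-1}$ elapsed time, so the target sits at distance $d+2vs_{k-1}$, giving $T^\sigma_{\mathcal{A}}(u)=2s_{k-1}+u(d+2vs_{k-1})=ud+2us_{k-1}$ and hence the claimed ratio after dividing by $ud$. The extra discussion you add about the catch occurring on the outbound leg of a round with parity $\sigma$ is a reasonable clarification but not a departure from the paper's argument.
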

\begin{appendixonly}
    \begin{proof}
        The robot begins round $k$ at the origin.
        A total of $2s_{k-1}$ time has been spent on previous rounds, giving the target an additional $2vs_{k-1}$ distance head start.
        Therefore, the time required to catch the target is:
        \begin{align*}
            T^\sigma_{\mathcal{A}}(u)  & = 2s_{k-1} + u\left(d + 2vs_{k-1}\right)
            = 2s_{k-1} + ud + 2(uv)s_{k-1}                                        \\
                                       & = 2s_{k-1} + ud + 2(u-1)s_{k-1}
            = ud + 2us_{k-1}                                                      \\
            CR^\sigma_{\mathcal{A}}(u) & = \frac{1}{ud}T^\sigma_{\mathcal{A}}(u)
            = 1 + \frac{2}{d}s_{k-1}
        \end{align*}
        This proves Lemma~\myref{lem:no_speed:lower:cru_value}. ~\qed
    \end{proof}
\end{appendixonly}

\begin{lemma} \mylabel{lem:no_speed:upper:only_key_pts}
    Consider a zigzag strategy $\mathcal{A}$ whose defining sequence of evasiveness values $\left(u_i\right)_{i=0}^{\infty}$ is strictly increasing.
    For a round number $k\geq 0$, if a function $F(u)$ is increasing for all $u \geq u_k$,
    then statement \myeqref{eq:key_pts_eq_key} implies statement \myeqref{eq:key_pts_eq_all}.
    \begin{align}
        1+\frac{2}{d}s_{i+1} \leq F(u_i) &  & \forall i\geq k \mylabel{eq:key_pts_eq_key} ,  \\
        CR_{\mathcal{A}}(u) \leq F(u)    &  & \forall u>u_{k}  \mylabel{eq:key_pts_eq_all} .
    \end{align}
\end{lemma}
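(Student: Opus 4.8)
The plan is to fix an arbitrary evasiveness $u > u_k$ and determine precisely which round of $\mathcal{A}$ catches it. Since on round $i$ the robot visits the side $i \bmod 2$ and reaches exactly far enough to catch evasiveness $u_i$ on that side, the target of evasiveness $u$ on side $\sigma$ is first reached on round $m_\sigma := \min\{\, i : i \equiv \sigma \pmod 2 \text{ and } u_i \ge u \,\}$; this is well-defined because $(u_i)$ is strictly increasing with $u_i \to \infty$ (Definition~\myref{def:zigzag}), and by minimality $u_{m_\sigma - 2} < u$ whenever $m_\sigma \ge 2$. By Lemma~\myref{lem:no_speed:lower:cru_value}, $CR^\sigma_{\mathcal{A}}(u) = 1 + \tfrac{2}{d} s_{m_\sigma - 1}$; since $s_i = s_{i-1} + x_i$ with $x_i > 0$, the sequence $(s_i)$ is increasing, so $CR_{\mathcal{A}}(u) = \max_{\sigma} CR^\sigma_{\mathcal{A}}(u) = 1 + \tfrac{2}{d} s_{m-1}$ where $m := \max(m_0, m_1)$.

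Next I would locate $m$ relative to $k$. From $u > u_k$ and strict monotonicity of $(u_i)$, every index $i \le k$ satisfies $u_i \le u_k < u$, so $m_0 \ge k+1$ and $m_1 \ge k+1$; as $m_0$ and $m_1$ have opposite parities they are distinct, hence $m = \max(m_0, m_1) \ge k+2$. In particular $m - 2 \ge k \ge 0$, so $u_{m-2}$ is defined and lies in $[u_k, \infty)$, and applying the minimality observation to whichever of $m_0, m_1$ realizes the maximum gives $u_{m-2} < u$.

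Finally I would chain the estimates: hypothesis \myeqref{eq:key_pts_eq_key} applied with index $i = m-2$, which satisfies $i \ge k$, yields $1 + \tfrac{2}{d} s_{m-1} \le F(u_{m-2})$, and since $F$ is increasing on $[u_k, \infty)$ with $u_k \le u_{m-2} < u$, we get $F(u_{m-2}) \le F(u)$; combining, $CR_{\mathcal{A}}(u) = 1 + \tfrac{2}{d} s_{m-1} \le F(u)$, which is exactly \myeqref{eq:key_pts_eq_all}. The one step needing care — and the only real obstacle — is the parity bookkeeping that forces $m \ge k+2$, so that \myeqref{eq:key_pts_eq_key} is genuinely available at index $m-2$ and the monotonicity of $F$ applies at $u_{m-2}$; once that is in hand, the rest is a direct substitution.
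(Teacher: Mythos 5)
Your argument is correct and follows essentially the same path as the paper's proof: both reduce the claim to Lemma~\myref{lem:no_speed:lower:cru_value}, apply hypothesis~\myeqref{eq:key_pts_eq_key} at the appropriate index, and close with the monotonicity of $F$. Your bookkeeping is somewhat tighter---observing that $m_0, m_1 \geq k+1$ with opposite parities forces $m = \max(m_0, m_1) \geq k+2$ neatly sidesteps the paper's separate treatment of the opposite-parity side on the range $u_k < u \leq u_{k+1}$.
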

\begin{proof}
    Consider any round $i$ with parity $\sigma$.
    We know that all evasiveness values $u_i < u \leq u_{i+2}$ on side $\sigma$ are caught on round $i+2$.
    By Lemma~\myref{lem:no_speed:lower:cru_value}, $CR^\sigma_{\mathcal{A}}(u)=1+\frac{2}{d}s_{i+1}$ for all $u_i < u \leq u_{i+2}$.
    Consequently, if $F(u_i) \geq 1+\frac{2}{d}s_{i+1}$ and $F(u_i)$ is increasing within the interval $[u_i, u_{i+2}]$, then $F(u) \geq CR^\sigma_{\mathcal{A}}(u)$ for all $u_i < u \leq u_{i+2}$.
    Let $\sigma_k$ denote the parity of $k$.
    Statement \myeqref{eq:key_pts_eq_key} allows us to apply our prior reasoning to all rounds $i\geq k$, collectively yielding the following:
    \begin{align*}
        F(u) & \geq CR^{\sigma_k}_{\mathcal{A}}(u)   &  & \forall u > u_k ,     \\
        F(u) & \geq CR^{1-\sigma_k}_{\mathcal{A}}(u) &  & \forall u > u_{k+1} .
    \end{align*}
    The latter statement can be extended to all $u>u_k$ as follows.
    Statement \myeqref{eq:key_pts_eq_key} yields the following in conjunction with the fact that $s_{k+1} \geq s_k$:
    \begin{align*}
        F(u) \geq 1+\frac{2}{d}s_{k+1} \geq 1+\frac{2}{d}s_{k} &  & \forall u_k < u \leq u_{k+2} .
    \end{align*}
    Recall the stipulation that $\left(u_i\right)_{i=0}^{\infty}$ is strictly increasing.
    We can reduce the applicable range of the statement above to $u_k < u \leq u_{k+1}$, which is a subset of the range $u_{k-1} < u \leq u_{k+1}$ on which $1+\frac{2}{d}s_{k} = CR^{1-\sigma_k}_{\mathcal{A}}(u)$.
    In the event that $k=0$, this range is instead $1 \leq u \leq u_{k+1}$, which still contains all $u_k < u \leq u_{k+1}$.
    From this, we can conclude that $F(u)\geq CR^{1-\sigma_k}_{\mathcal{A}}(u)$ for all $u_k < u \leq u_{k+1}$.
    \begin{align*}
        F(u) & \geq CR^{\sigma_k}_{\mathcal{A}}(u)\;,\;F(u) \geq CR^{1-\sigma_k}_{\mathcal{A}}(u)                         &  & \forall u > u_k , \\
        F(u) & \geq \max\left(CR^{\sigma_k}_{\mathcal{A}}(u),CR^{1-\sigma_k}_{\mathcal{A}}(u)\right)= CR_{\mathcal{A}}(u) &  & \forall u > u_k .
    \end{align*}
    This proves Lemma~\myref{lem:no_speed:upper:only_key_pts}. ~\qed
\end{proof}

\begin{lemma}\mylabel{lem:sum_leq_prod}
    Given a zigzag strategy, for any $i\geq 0$,
    $    s_i \leq 2^{i+1}\cdot d\prod_{j=0}^{i}u_j .
    $
\end{lemma}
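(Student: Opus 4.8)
The plan is to prove Lemma~\myref{lem:sum_leq_prod} by induction on $i$, feeding in the exact recurrence $s_i = u_i d + (2u_i-1)s_{i-1}$ from Lemma~\myref{lem:si_recurrence} (with $s_{-1}=0$). A first attempt reveals a small wrinkle: the bound $s_i \le 2^{i+1}d\prod_{j=0}^{i}u_j$ is just slightly too weak to carry itself through the induction — substituting $s_{i-1}\le 2^{i}d\prod_{j=0}^{i-1}u_j$ and using $2u_i-1\le 2u_i$ leaves an uncancelled additive term $u_i d$. So I would instead prove the sharper statement
\begin{align*}
    s_i \;\le\; \bigl(2^{i+1}-1\bigr)\, d\prod_{j=0}^{i}u_j ,
\end{align*}
which trivially implies the lemma since $2^{i+1}-1 < 2^{i+1}$.

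For the base case $i=0$, the recurrence gives $s_0 = u_0 d = (2^{1}-1)\,d\,u_0$, so equality holds. For the inductive step, assume $s_{i-1}\le (2^{i}-1)\,d\prod_{j=0}^{i-1}u_j$, and abbreviate $R = d\prod_{j=0}^{i-1}u_j$; since every $u_j\ge 1$ we have $d\le R$. Then
\begin{align*}
    s_i = u_i d + (2u_i-1)s_{i-1} \;\le\; u_i R + (2u_i-1)(2^{i}-1)R \;=\; R\bigl(u_i(2^{i+1}-1) - (2^{i}-1)\bigr),
\end{align*}
after expanding and collecting terms. Because $2^{i}-1\ge 0$, the right-hand side is at most $(2^{i+1}-1)\,u_i R = (2^{i+1}-1)\,d\prod_{j=0}^{i}u_j$, which closes the induction.

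As an alternative I could unwind the recurrence in closed form, $s_i = d\sum_{k=0}^{i} u_k\prod_{j=k+1}^{i}(2u_j-1)$, bound each factor $2u_j-1$ by $2u_j$, observe that $u_k\prod_{j=k+1}^{i}u_j \le \prod_{j=0}^{i}u_j$ (padding with the omitted factors $u_0,\dots,u_{k-1}\ge 1$), and then sum the geometric series $\sum_{k=0}^{i}2^{i-k} = 2^{i+1}-1$. Either route is short. There is no genuine obstacle beyond the bookkeeping point that the naive induction hypothesis must be tightened by the additive $-1$ (equivalently, that one keeps the $2^{i+1}-1$ from the geometric sum rather than discarding it); once that is noticed, the rest is routine algebra.
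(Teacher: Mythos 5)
Your proof is correct, and your primary route (induction with the sharpened hypothesis $s_i \leq (2^{i+1}-1)\,d\prod_{j=0}^{i}u_j$) differs slightly in presentation from the paper's, which instead unwinds the recurrence into the closed form $s_i = d\sum_{j=0}^{i}u_j\prod_{k=j+1}^{i}(2u_k-1)$, bounds $2u_k-1\leq 2u_k$ and $\prod_{k=j}^{i}u_k\leq\prod_{k=0}^{i}u_k$, and sums the geometric series $\sum_{j=0}^{i}2^{i-j}=2^{i+1}-1$ — precisely the ``alternative'' you sketch at the end. The two arguments are mathematically the same fact packaged differently: your induction carries the partial geometric sum $2^{i+1}-1$ as the inductive constant, while the paper computes the whole sum at once. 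Your observation that the na\"ive hypothesis with constant $2^{i+1}$ fails to close (because the additive $u_i d$ term is not absorbed) and must be tightened to $2^{i+1}-1$ is the genuine bookkeeping insight, and it mirrors the paper's step of retaining $2^{i+1}-1$ before relaxing to $2^{i+1}$ only at the very end. One small point worth noting: your inductive step quietly uses $d\le R=d\prod_{j=0}^{i-1}u_j$ (valid since $u_j\ge 1$) to replace $u_id$ by $u_iR$; you do state this, so the argument is complete.
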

\begin{appendixonly}
    \begin{proof}
        Lemma~\myref{lem:si_recurrence} states that $s_i = u_id+\left(2u_i-1\right)s_{i-1}$.
        One can easily verify that this recurrence unwinds as follows:
        \begin{equation*}
            s_i = d\sum_{j=0}^{i}u_j\prod_{k=j+1}^{i}(2u_k-1) .
        \end{equation*}
        For the purpose of finding an upper bound, let us ignore the $-1$ and simplify.
        \begin{equation*}
            s_i \leq d\sum_{j=0}^{i}\left(2^{i-j}\prod_{k=j}^{i}u_k\right) .
        \end{equation*}
        Note that since $j\geq 0$ and $u_k \geq 1$ for any $k$,
        $ \prod_{k=j}^{i}u_k \leq \prod_{k=0}^{i}u_k .
        $
        Therefore,
        \begin{align*}
            s_i & \leq d\left(\prod_{k=0}^{i}u_k\right)\sum_{j=0}^{i}2^{i-j}
            = d\left(\prod_{k=0}^{i}u_k\right)\left(2^{i+1}-1\right)
            \leq 2^{i+1} \cdot d\prod_{k=0}^{i}u_k .
        \end{align*}
        This proves Lemma~\myref{lem:sum_leq_prod}. ~\qed
    \end{proof}
\end{appendixonly}

The following notation will be used in the upcoming Lemmas.
For $n\geq 0$, let $f^{<n>}(x)$ denote the ``$n$-th difference'' of a function $f$ at $x$ as
$        f^{<0>}(x) = f(x),~
    f^{<n>}(x) = f^{<n-1>}(x) - f^{<n-1>}(x-1).
$
There are obvious parallels between $f^{<n>}(x)$ and the $n$-th derivative, $f^{(n)}(x)$.
Note that difference operations tend to alter the domain of $f$ more significantly than differentiation.
The useful properties listed below apply only where the result is defined.
Just like with regular derivatives,
$    \left(f^{<a>}\right)^{<b>}(x) = f^{<a+b>}(x) .
$
Additionally, by the linearity of differentiation,
\begin{align*}
    \left(f^{<1>}\right)^{(1)}(x) & = \frac{\mathrm{d}}{\mathrm{d}x}\left(f(x) - f(x-1)\right)
    = f^{(1)}(x) - f^{(1)}(x-1)
    = \left(f^{(1)}\right)^{<1>}(x) .
\end{align*}
In conjunction with the properties of regular derivatives, the above two properties reveal that when a function $f(x)$ is differentiated $a$ times and has $b$ first difference operations applied, the order in which all $a+b$ operations are applied does not matter.
Let us write the result of such a procedure as $f^{(a)<b>}(x)$ or $f^{<b>(a)}(x)$.

\begin{lemma} \mylabel{lem:sum_decomp}
    Consider two arbitrary functions $f(x)$ and $g(x)$ defined on integers $x\geq 0$.
    Let $G(x)$ be shorthand for $\sum_{j=0}^{x}g(j)$ where $G(0)=g(0)$ and $G(-1)=0$.
    For any constants $n\geq 0$ and $m\geq 0$,
    \begin{equation}
        \sum_{j=0}^{n}g(j)f^{<m>}(j) = G(n)f^{<m>}(n) - \sum_{j=0}^{n-1}G(j)f^{<m+1>}(j+1) . \nonumber
    \end{equation}
\end{lemma}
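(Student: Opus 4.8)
The plan is to prove this by a discrete "summation by parts" (Abel summation) argument, which is the natural discrete analogue of integration by parts applied to the product $g(j)\cdot f^{<m>}(j)$. Since $m\geq 0$ is a constant and $f^{<m>}$ is itself just some function defined on integers $x\geq -1$ or so, it suffices to prove the identity
\[
\sum_{j=0}^{n} g(j) h(j) \;=\; G(n)h(n) \;-\; \sum_{j=0}^{n-1} G(j)\bigl(h(j+1)-h(j)\bigr)
\]
for an arbitrary function $h$, and then substitute $h = f^{<m>}$, noting that $h(j+1)-h(j) = f^{<m>}(j+1) - f^{<m>}(j) = \bigl(f^{<m>}\bigr)^{<1>}(j+1) = f^{<m+1>}(j+1)$ by the definition of the difference operator and the composition property $\left(f^{<a>}\right)^{<b>} = f^{<a+b>}$ recorded just before the lemma.

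For the core Abel summation identity I would proceed as follows. First, write $g(j) = G(j) - G(j-1)$, which is valid for all $j\geq 0$ given the conventions $G(0)=g(0)$ and $G(-1)=0$. Then
\[
\sum_{j=0}^{n} g(j)h(j) \;=\; \sum_{j=0}^{n} \bigl(G(j) - G(j-1)\bigr) h(j) \;=\; \sum_{j=0}^{n} G(j)h(j) \;-\; \sum_{j=0}^{n} G(j-1)h(j).
\]
In the second sum, reindex $j \mapsto j+1$ to get $\sum_{j=-1}^{n-1} G(j) h(j+1)$; the $j=-1$ term vanishes since $G(-1)=0$, so this equals $\sum_{j=0}^{n-1} G(j)h(j+1)$. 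Now split the first sum by peeling off its top term: $\sum_{j=0}^{n} G(j)h(j) = G(n)h(n) + \sum_{j=0}^{n-1} G(j)h(j)$. Combining,
\[
\sum_{j=0}^{n} g(j)h(j) \;=\; G(n)h(n) \;+\; \sum_{j=0}^{n-1} G(j)h(j) \;-\; \sum_{j=0}^{n-1} G(j)h(j+1) \;=\; G(n)h(n) \;-\; \sum_{j=0}^{n-1} G(j)\bigl(h(j+1)-h(j)\bigr),
\]
which is exactly the desired form. Finally substitute $h=f^{<m>}$ and use $f^{<m>}(j+1) - f^{<m>}(j) = f^{<m+1>}(j+1)$ to recover the statement of the lemma.

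I do not expect any serious obstacle here — the only things to be careful about are the boundary/index bookkeeping (the roles of $G(-1)=0$ and $G(0)=g(0)$, and the off-by-one when reindexing), and making sure the substitution $h=f^{<m>}$ is legitimate, i.e. that $f^{<m>}$, $f^{<m+1>}$, and the relevant evaluation points all lie in the domain where everything is defined. The lemma statement already hedges this with the phrasing "where the result is defined" / "defined on integers $x\geq 0$", so I would simply note that all evaluations $f^{<m>}(n), f^{<m>}(j), f^{<m>}(j+1)$ for $0\leq j\leq n$, and hence $f^{<m+1>}(j+1)$, are among the finitely many points on which $f$ and its differences need to be defined for the two sides to make sense. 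With that caveat stated, the proof is a two-line Abel summation plus a one-line substitution.
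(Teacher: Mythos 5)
Your proof is correct and is essentially the same Abel summation (summation by parts) argument as the paper's; you begin directly from $g(j)=G(j)-G(j-1)$ and reindex, whereas the paper reaches the same place by first peeling off the $j=0$ term and then adding and subtracting $\sum_{j=1}^n G(j)f^{<m>}(j)$, but the underlying decomposition and bookkeeping are identical.
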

\begin{appendixonly}
    \begin{proof}
        The following sum transformations are valid for all $n\geq 0$.
        \begin{align*}
                  & \sum_{j=0}^{n}g(j)f^{<m>}(j)                                                               \\
            =\;\; & G(0)f^{<m>}(0) + \sum_{j=1}^{n}g(j)f^{<m>}(j)                                              \\
            =\;\; & \sum_{j=0}^{n}G(j)f^{<m>}(j) - \sum_{j=1}^{n}G(j)f^{<m>}(j) + \sum_{j=1}^{n}g(j)f^{<m>}(j) \\
            =\;\; & \sum_{j=0}^{n}G(j)f^{<m>}(j) - \sum_{j=1}^{n}G(j-1)f^{<m>}(j)                              \\
            =\;\; & G(n)f^{<m>}(n) + \sum_{j=0}^{n-1}G(j)f^{<m>}(j) - \sum_{j=0}^{n-1}G(j)f^{<m>}(j+1)         \\
            =\;\; & G(n)f^{<m>}(n) - \sum_{j=0}^{n-1}G(j)f^{<m+1>}(j+1) .
        \end{align*}
        This proves Lemma~\myref{lem:sum_decomp}. ~\qed
    \end{proof}
\end{appendixonly}

\begin{lemma} \mylabel{lem:no_speed:upper:diff_deriv_positivity}
    For the function $f(x)=\sqrt{x}$ and for any integers $k\geq 1$ and $0\leq m\leq k$, $(-1)^{k+1}f^{(k-m)<m>}(x)$ is non-negative for all $x>m$.
\end{lemma}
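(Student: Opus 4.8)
The plan is to fix the exponent parameter $k \ge 1$ and induct on $m$, running $m$ from $0$ up to $k$. The key enabling fact is the commutativity of ordinary differentiation with the first-difference operator noted just above the statement, which lets us freely re-order the $a+b$ operations inside $f^{(a)<b>}$. The inductive hypothesis at stage $m$ will be the assertion itself for that $m$, namely that $(-1)^{k+1} f^{(k-m)<m>}(x) \ge 0$ for all $x > m$.

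For the base case $m = 0$ the assertion becomes $(-1)^{k+1} f^{(k)}(x) \ge 0$ on $(0,\infty)$, which I would verify by direct differentiation: $f^{(k)}(x) = \bigl(\prod_{i=0}^{k-1}(\tfrac12 - i)\bigr)x^{1/2-k}$, whose coefficient has one positive factor ($i=0$) and $k-1$ negative factors, hence sign $(-1)^{k-1} = (-1)^{k+1}$, while $x^{1/2-k} > 0$. For the inductive step I would fix $1 \le m \le k$, peel off the outermost difference, and invoke the fundamental theorem of calculus together with the commutativity property:
\begin{align*}
    f^{(k-m)<m>}(x)
    &= f^{(k-m)<m-1>}(x) - f^{(k-m)<m-1>}(x-1) \\
    &= \int_{x-1}^{x} \frac{\mathrm{d}}{\mathrm{d}t} f^{(k-m)<m-1>}(t)\,\mathrm{d}t
     = \int_{x-1}^{x} f^{(k-m+1)<m-1>}(t)\,\mathrm{d}t .
\end{align*}
Since $k - m + 1 = k - (m-1)$ and, for $x > m$, the interval $[x-1,x]$ is contained in $(m-1,\infty)$, the inductive hypothesis at stage $m-1$ tells us the integrand has constant sign $(-1)^{k+1}$ on $[x-1,x]$; therefore the integral does too, which is exactly the claim at stage $m$.

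The one step that genuinely needs care — the ``main obstacle'', modest as it is — is the domain bookkeeping that justifies the use of the fundamental theorem of calculus: one must check that $f^{(k-m)<m-1>}$ is continuously differentiable on the closed interval $[x-1,x]$, and this is precisely where the hypothesis $x > m$ enters. The arguments of $f^{(k-m)}$ occurring in $f^{(k-m)<m-1>}(t)$ (and in its derivative) for $t \in [x-1,x]$ range down to $x - m$, and $f^{(k-m)}$ — a constant times a real power of its argument — is smooth exactly on $(0,\infty)$, so well-definedness holds precisely when $x - m > 0$. It then remains only to confirm that the shift of the domain condition from $t > m-1$ in the hypothesis to $x > m$ in the conclusion is consistent, which is immediate from $[x-1,x]\subseteq(m-1,\infty)$, closing the induction.
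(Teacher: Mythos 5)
Your proof is correct and takes essentially the same route as the paper: induction on $m$ with $k$ fixed, exploiting the commutation of differentiation with the first-difference operator. The only cosmetic difference is that the paper phrases the inductive step via monotonicity (nonnegative derivative $\Rightarrow$ increasing $\Rightarrow$ nonnegative difference) while you phrase it via the fundamental theorem of calculus (difference equals integral of a nonnegative integrand), and you supply the explicit sign computation for the base case that the paper leaves to the reader.
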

\begin{appendixonly}
    \begin{proof}
        Consider some integer $k\geq 1$.
        With basic calculus, one can show that $(-1)^{k+1}f^{(k)}(x)$ is non-negative within its domain of $x>0$.
        This proves the case where $m=0$.
        The remaining cases follow using induction.
        For the sake of clarity, let $g(x) = (-1)^{k+1}f(x)$.
        Suppose that for some $0\leq m \leq k-1$, $g^{(k-m)<m>}(x)$ is non-negative for all $x>m$.
        This means that $g^{(k-m-1)<m>}(x)$ is increasing within the interval $x>m$.
        Hence, for all $x>m+1$,
        \begin{align*}
            g^{(k-m-1)<m>}(x)                     & \geq g^{(k-m-1)<m>}(x-1) \\
            g^{(k-m-1)<m>}(x)-g^{(k-m-1)<m>}(x-1) & \geq 0                   \\
            g^{(k-m-1)<m+1>}(x)                   & \geq 0
        \end{align*}
        By induction, $g^{(k-m)<m>}(x) = (-1)^{k+1}f^{(k-m)<m>}(x)$ is non-negative for all $x>m$ for any $0\leq m \leq k$, proving Lemma~\myref{lem:no_speed:upper:diff_deriv_positivity}. ~\qed
    \end{proof}
\end{appendixonly}

\begin{lemma} \mylabel{lem:no_speed:upper:diff_bounds}
    For the function $f(x) = \sqrt{x}$ specifically, $f^{<k>}(x)$ is subject to the following upper and lower bounds for any integer $k\geq 1$ for all real $x>k$:
    \begin{equation*}
        (-1)^{k+1}f^{(k)}(x-\frac{1}{2}k) \;\leq\; (-1)^{k+1}f^{<k>}(x) \;\leq\; (-1)^{k+1}f^{(k)}(x-k) .
    \end{equation*}
    Instances of $(-1)^{k+1}$ can be removed by appropriately alternating the direction of the inequality.
\end{lemma}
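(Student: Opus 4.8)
The plan is to write the $k$-th finite difference of $f$ as a $k$-fold \emph{average} of $f^{(k)}$ over the unit cube, and then squeeze it using the sign, monotonicity, and convexity of $(-1)^{k+1}f^{(k)}$, all of which are supplied by the $m=0$ cases of Lemma~\ref{lem:no_speed:upper:diff_deriv_positivity}.

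\medskip
\noindent\textbf{Step 1 (integral representation).} First I would prove, by induction on $k$, that for all real $x>k$
\[
  f^{<k>}(x) \;=\; \int_0^1\!\!\cdots\!\int_0^1 f^{(k)}\!\Big(x-\textstyle\sum_{j=1}^{k} t_j\Big)\,dt_1\cdots dt_k ,
\]
the restriction $x>k$ ensuring that every argument $x-\sum_j t_j$ stays in $(0,\infty)$, where $f$ and its derivatives are defined. The base case $k=1$ is the fundamental theorem of calculus: $f^{<1>}(x)=f(x)-f(x-1)=\int_0^1 f'(x-t)\,dt$. For the inductive step, write $f^{<k+1>}(x)=f^{<k>}(x)-f^{<k>}(x-1)$, push the difference inside the $k$-fold integral supplied by the inductive hypothesis, and apply the fundamental theorem of calculus once more in a fresh variable $t_{k+1}$.

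\medskip
\noindent\textbf{Step 2 (properties of $g:=(-1)^{k+1}f^{(k)}$).} Applying Lemma~\ref{lem:no_speed:upper:diff_deriv_positivity} with $m=0$ at orders $k$, $k+1$, and $k+2$ shows that $g$ is non-negative, non-increasing, and convex on $(0,\infty)$ (equivalently, one may just observe $f^{(k)}(x)=c_k x^{1/2-k}$ with $(-1)^{k+1}c_k>0$). Multiplying the identity of Step~1 by $(-1)^{k+1}$ gives $(-1)^{k+1}f^{<k>}(x)=\int_{[0,1]^k} g\big(x-\sum_j t_j\big)\,d\mathbf{t}$, i.e.\ an average of $g$ against the uniform probability measure on $[0,1]^k$.

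\medskip
\noindent\textbf{Step 3 (the two bounds).} The upper bound follows at once from monotonicity: since $\sum_{j=1}^{k} t_j\le k$ on the cube, we have $x-\sum_j t_j\ge x-k$, hence $g\big(x-\sum_j t_j\big)\le g(x-k)$, and integrating over the cube (total mass $1$) yields $(-1)^{k+1}f^{<k>}(x)\le g(x-k)=(-1)^{k+1}f^{(k)}(x-k)$. The lower bound is Jensen's inequality for the convex function $g$: $\int_{[0,1]^k} g\big(x-\sum_j t_j\big)\,d\mathbf{t}\ \ge\ g\Big(x-\int_{[0,1]^k}\sum_j t_j\,d\mathbf{t}\Big)=g\big(x-\tfrac{k}{2}\big)$, using $\int_{[0,1]^k}\sum_{j=1}^{k} t_j\,d\mathbf{t}=k/2$. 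Rewriting $g=(-1)^{k+1}f^{(k)}$ gives the two displayed inequalities, and the closing remark about flipping $(-1)^{k+1}$ handles the reversal of direction when $k$ is even.

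\medskip
\noindent\textbf{Expected main obstacle.} The upper bound (together with the weaker lower bound $(-1)^{k+1}f^{(k)}(x)$) would already drop out of the generalized mean value theorem for finite differences, so the real content is the \emph{sharp} lower bound $(-1)^{k+1}f^{(k)}(x-\tfrac12 k)$: this genuinely requires convexity of $g$ together with the centering identity $\int_{[0,1]^k}\sum_j t_j\,d\mathbf{t}=k/2$, and it is precisely where the exact shift $\tfrac12 k$ (rather than anything weaker) must be extracted. The rest — the induction of Step~1 and the sign bookkeeping with $(-1)^{k+1}$ — is routine, and the hypothesis $x>k$ is exactly what is needed to keep all evaluation points inside the domain $(0,\infty)$.
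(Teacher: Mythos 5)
Your proof is correct and takes a genuinely different route from the paper. The paper proves a family of statements $S(m)$, $0\le m\le k$, of the form
\begin{equation*}
(-1)^{k+1}f^{(m)<k-m>}\bigl(x-\tfrac{1}{2}m\bigr)\;\le\;(-1)^{k+1}f^{<k>}(x)\;\le\;(-1)^{k+1}f^{(m)<k-m>}(x-m),
\end{equation*}
by induction on $m$: at each step it trades one finite difference for one derivative, using the concavity of $g=(-1)^{k+1}f^{(m)<k-m-1>}$ (from Lemma~\ref{lem:no_speed:upper:diff_deriv_positivity}) to get the upper bound, and a midpoint tangent-line argument for $g'$ to get the half-shift in the lower bound. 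You instead collapse all $k$ differences at once into the cube integral
$f^{<k>}(x)=\int_{[0,1]^k} f^{(k)}\bigl(x-\sum_j t_j\bigr)\,d\mathbf{t}$,
then read off the upper bound from the monotonicity of $g=(-1)^{k+1}f^{(k)}$ and the sharp lower bound from Jensen's inequality together with the centering identity $\int_{[0,1]^k}\sum_j t_j\,d\mathbf{t}=k/2$. The two approaches are morally the same --- the paper's midpoint tangent-line step is exactly the one-dimensional special case of Jensen for a convex derivative --- but yours packages the whole induction into a single integral identity, making the source of the $\tfrac12 k$ shift transparent, while the paper's version stays entirely at the level of pointwise estimates on interleaved derivative/difference operators and reuses Lemma~\ref{lem:no_speed:upper:diff_deriv_positivity} directly as the sign/concavity input. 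Your proof is shorter and, I would say, cleaner; the paper's fits more seamlessly with the mixed-operator notation $f^{(a)<b>}$ that it has already set up.
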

\begin{appendixonly}
    \begin{proof}
        Let the following statement be labelled $S(m)$ for some integer $0\leq m\leq k$:
        for any integer $k\geq 1$ and all real $x>k$,
        \begin{equation*}
            (-1)^{k+1}f^{(m)<k-m>}(x-\frac{1}{2}m) \;\leq\; (-1)^{k+1}f^{<k>}(x) \;\leq\; (-1)^{k+1}f^{(m)<k-m>}(x-m) .
        \end{equation*}
        We will prove statement $S(m)$ for all $0\leq m\leq k$ in order of increasing $m$ using induction.
        The final case, $S(k)$, is equivalent to Lemma~\myref{lem:no_speed:upper:diff_bounds}.
        Our base case, $S(0)$, is a trivial statement as all three members of the inequality are equal.

        As an inductive hypothesis, assume that $S(m)$ holds for some $0\leq m\leq k-1$.
        Consider the function $g(x) = (-1)^{k+1}f^{(m)<k-m-1>}(x)$ where it is guaranteed that $k-m-1\geq 0$.
        $g(x)$ is at least defined for all $x>k-m-1$.
        According to Lemma~\myref{lem:no_speed:upper:diff_deriv_positivity}, $g(x)$ is concave down, meaning that
        \begin{align*}
            g(x)                        & \leq g(x-1) + g'(x-1)                  &  & x>k-m \\
            g^{<1>}(x)                  & \leq g'(x-1)                           &  & x>k-m \\
            g^{<1>}(x-m)                & \leq g'(x-m-1)                         &  & x>k   \\
            (-1)^{k+1}f^{(m)<k-m>}(x-m) & \leq (-1)^{k+1}f^{(m+1)<k-m-1>}(x-m-1) &  & x>k
        \end{align*}
        In conjunction with the inductive hypothesis, this proves the upper bound portion of statement $S(m+1)$.
        Next, we consider the lower bound portion.
        $g^{<1>}(x)$ can be interpreted as the area under $g'(x)$ within the interval from $x-1$ to $x$.
        According to Lemma~\myref{lem:no_speed:upper:diff_deriv_positivity}, $g'(x)$ is non-negative and concave up.
        As a result, $g'(x)$ is always at or above the line tangent to $g'(x)$ at the point $\left(x-\frac{1}{2},g'(x-\frac{1}{2})\right)$.
        Within the interval from $x-1$ to $x$, the signed area under that tangent line (regardless of its slope) is equal to $g'(x-\frac{1}{2})$.
        Since the tangent line is always at or below $g'(x)$, we can conclude that $g^{<1>}(x) \geq g'(x-\frac{1}{2})$, for $x>k-m$, and that $g^{<1>}(x-\frac{1}{2}m) \geq g'(x-\frac{1}{2}(m+1))$, for $k-\frac{1}{2}m < k < x$.     
        Therefore,
        \begin{align*}
            (-1)^{k+1}f^{(m)<k-m>}(x-\frac{1}{2}m) & \geq (-1)^{k+1}f^{(m+1)<k-m-1>}(x-\frac{1}{2}(m+1))
        \end{align*}
        for $x>k$. Together with the inductive hypothesis, this proves the lower bound portion of statement $S(m+1)$.
        By induction, $S(k)$ is true, which proves Lemma~\myref{lem:no_speed:upper:diff_bounds}. ~\qed
    \end{proof}
\end{appendixonly}

\begin{mainonly}
    \begin{proof} (Theorem \myref{thm:nospeed_upbound} - sketch)
        We begin by imagining an upper bound of the form $CR_{\mathcal{A}}(u) \leq F(u) = cu^{4-h(u)}$. Lemma \myref{lem:no_speed:upper:only_key_pts} reveals that this condition holds for all $u>u_1$ if $F(u)$ is increasing on that range and $1+\frac2ds_{i+1} \leq F(u_i)$ for all $i\geq 1$. From $F(u_i)$, we isolate $h(u_i)$ to obtain the following:
        \begin{equation*}
            h(u_i) \leq \frac{4\log_2(u_i) - \log_2\left(1+\frac{2}{d}s_{i+1}\right) + \log_2 c}{\log_2(u_i)} \qquad \forall i\geq 1.
        \end{equation*}
        We continue to transform this condition so that it is more strict but easier to work with. Lemma \myref{lem:sum_leq_prod} converts $s_{i+1}$ into a product, which becomes a sum under the logarithm. We apply Lemma \myref{lem:sum_decomp} to the resulting sum three times, which leaves us with many terms, a few of which cleanly cancel the $4\log_2(u_i)$ term. We then reduce all but one of the remaining terms to constants with the help of Lemma \myref{lem:no_speed:upper:diff_bounds}; $c=56.18$ is large enough that the $\log_2 c$ term cancels those constants. The remaining term is simplified using Lemma \myref{lem:no_speed:upper:diff_bounds}. It can then cancel out the denominator, leaving $(i+1)^{-2}$ on the right-hand side, which is larger than $h(u_i) = \left(\log_2 \log_2 u_i\right)^{-2}$. We note that for such an $h(u)$, the resulting function $F(u)$ is increasing where needed. We conclude that $F(u) = 56.18u^{4-\left(\log_2 \log_2 u\right)^{-2}}$ is a valid upper bound on $CR_{\mathcal{A}}(u)$ for all $u>u_1$.

        In the case where $u\leq u_1$, we calculate $CR_{\mathcal{A}}(u)$ precisely using Lemma \myref{lem:no_speed:lower:cru_value}. As a consequence of Lemma \myref{lem:no_speed:lower:cru_value}, $CR_{\mathcal{A}}(u)$ only has two values in this range. We find that $F(u)$ remains a valid upper bound for all $4<u\leq u_1$. Meanwhile, $CR_{\mathcal{A}}(u)=9$ for all $u\leq 4$. This concludes the proof sketch for Theorem \myref{thm:nospeed_upbound}.~\qed
    \end{proof}
\end{mainonly}

\begin{appendixonly}
\begin{proof}[Theorem~\myref{thm:nospeed_upbound}]
    With all of the important lemmas proven, let us proceed with the proof of Theorem \myref{thm:nospeed_upbound}.
    Let us formulate our hypothetical upper bound in terms of some function $h(u)$ as follows:
    \begin{align}
        CR_{\mathcal{A}}(u) \leq F(u) = cu^{4-h(u)} &  & \forall u > u_{crit} . \mylabel{eq:basic_bigo_h}
    \end{align}
    The bulk of the proof will show that this inequality is satisfied with $h(u)=(\log_2 \log_2 u)^{-2}$, $c = 56.18$, and $u_{crit} = u_1 \approx 179.18$.
    Cases where $u\leq u_1$ will be addressed at the end of the proof.

    To begin, Lemma~\myref{lem:no_speed:upper:only_key_pts} reveals that Equation \myeqref{eq:basic_bigo_h} can be satisfied for all $u > u_1$ with the following:
    \begin{align}
        1+\frac{2}{d}s_{i+1} \leq cu_i^{4-h(u_i)} &  & \forall i\geq 1 . \mylabel{eq:key_pt_bigo}
    \end{align}
    The values of $c$ and $h(u)$ will be gradually derived.
    $h(u_i)$ can be isolated as follows:
    \begin{equation}
        h(u_i) \leq \frac{4\log_2(u_i) - \log_2\left(1+\frac{2}{d}s_{i+1}\right) + \log_2 c}{\log_2(u_i)} \qquad \forall i\geq 0 . \mylabel{eq:key_pt_bigo_iso}
    \end{equation}
    Our approach from this point forward will be to repeatedly replace the right hand side of Equation \myeqref{eq:key_pt_bigo_iso} with expression that are lesser or equal in value. This ensures that the $h(u)$ we derive satisfies Equation \myeqref{eq:key_pt_bigo_iso}.
    Let the following portion of Equation \myeqref{eq:key_pt_bigo_iso} be dubbed $H(i)$:
    \begin{equation}
        H(i) = 4\log_2(u_i)-\log_2\left(1+\frac{2}{d}s_{i+1}\right) . \mylabel{eq:important_sum}
    \end{equation}
    We will derive a lower bound for $H(i)$ in which many parts have cancelled one another, then substitute it into Equation \myeqref{eq:key_pt_bigo_iso}.
    First, we use Lemma~\myref{lem:sum_leq_prod} to obtain an upper bound on $s_{i+1}$.
    \begin{equation*}
        H(i) \geq 4\log_2(u_i)-\log_2\left(1+8\cdot 2^i\prod_{j=0}^{i+1}u_j\right) .
    \end{equation*}
    Since $u_j\geq 1$ for all $j\geq 0$,
    \begin{align*}
        H(i) & \geq 4\log_2(u_i)-\log_2\left(9\cdot 2^i\prod_{j=0}^{i+1}u_j\right) \\
             & = 4\log_2(u_i)-\log_2 9 - i - \sum_{j=0}^{i+1}\log_2(u_j) .
    \end{align*}
    Substituting in the concrete value of $\log_2(u_j)$ yields
    \begin{align}
        H(i) & \geq \notag 4\left(3\cdot 2^i\sqrt{i+1}-1\right)-\log_2 9 - i - \sum_{j=0}^{i+1}\left(3\cdot 2^j\sqrt{j+1}-1\right) \\
             & = \notag 3\cdot 2^{i+2}\sqrt{i+1}-4-\log_2 9 - i + (i+2) - 3\sum_{j=0}^{i+1}2^j\sqrt{j+1}                           \\
             & = \mylabel{eq:precancel_sum} 3\cdot 2^{i+2}\sqrt{i+1}-2-\log_2 9 - 3\sum_{j=0}^{i+1}2^j\sqrt{j+1} .
    \end{align}
    Next, we focus on rewriting the sum.

    We will apply Lemma~\myref{lem:sum_decomp} three times to Equation \myeqref{eq:precancel_sum}.
    Let $f(n)=\sqrt{n+1}$.
    For integers $n\geq 0$, let $g_0(n)=2^n$ and $g_k(n)=\sum_{j=0}^{n}g_{k-1}(j)$ for $1\leq k\leq 3$.
    Given this, one can verify the following:
    \begin{align*}
        g_1(n) & = 2^{n+1}-1,~
        g_2(n) = 2^{n+2}-n-3,~
        g_3(n) = 2^{n+3}-\frac{1}{2}n^2-\frac{7}{2}n-7.
    \end{align*}
    One can also verify that $g_k(n)$ is non-negative for any non-negative integers $n$ and $k$.
    To begin, Equation \myeqref{eq:precancel_sum} can be rewritten as follows:
    \begin{equation*}
        H(i) \geq 3\cdot 2^{i+2}f(i)-2-\log_2 9 - 3\sum_{j=0}^{i+1}g_0(j)f(j) .
    \end{equation*}
    Given that $i\geq 1$, applying Lemma~\myref{lem:sum_decomp} three times to the above equation yields the following:
    \begin{align*}
        H(i) \geq\; & 3\cdot 2^{i+2}f(i) -2-\log_2 9
        - 3g_1(i+1)f(i+1)
        + 3g_2(i)f^{<1>}(i+1)                        \\
                    & - 3g_3(i-1)f^{<2>}(i+1)
        + 3\sum_{j=0}^{i-2}g_3(j)f^{<3>}(j+3) .
    \end{align*}
    Let us rewrite this in a more coherent fashion and substitute in concrete values for $f$ and $g$ where needed.
    Note that the concrete values for $g$ can be broken up into two terms: a positive and a negative one.

    \begin{align*}
        H(i) \geq & -2-\log_2 9                  &  & + 3\cdot 2^{i+2}f(i)                                      &  & \text{(Terms 1,2)} \\
                  & - 3\cdot 2^{i+2}f(i+1)       &  & +3f(i+1)                                                  &  & \text{(Terms 3,4)} \\
                  & + 3\cdot 2^{i+2}f^{<1>}(i+1) &  & -3(i+3)f^{<1>}(i+1)                                       &  & \text{(Terms 5,6)} \\
                  & - 3\cdot 2^{i+2}f^{<2>}(i+1) &  & + 3\left(\frac{1}{2}i^2+\frac{5}{2}i+4\right)f^{<2>}(i+1) &  & \text{(Terms 7,8)} \\
                  &                              &  & + 3\sum_{j=0}^{i-2}g_3(j)f^{<3>}(j+3) .                   &  & \text{(Term 9)}
    \end{align*}
    We will examine each of the above Terms 1-9 individually.
    By definition, $f^{<1>}(i+1) - f(i+1) + f(i) = 0$.
    As a result, terms 2, 3, and 5 sum to $0$.
    Term 9 is non-negative since both $g_3(j)$ and $f^{<3>}(j+3)$ are non-negative for all $j\geq 0$.
    This means that term 9 can be discarded in the pursuit of a simpler lower bound on $H(i)$.
    Let $\Phi(i)$ represent the sum of terms 4, 6, and 8:
    \begin{equation*}
        \Phi(i) = 3f(i+1) - 3(i+3)f^{<1>}(i+1) + 3\left(\frac{1}{2}i^2+\frac{5}{2}i+4\right)f^{<2>}(i+1) .
    \end{equation*}
    We will show that across all integers $i\geq 1$, $\Phi(i)$ has a minimum value of $\Phi(1)$, which equals $12\sqrt{3}-30\sqrt{2}+21 \approx -0.642$.
    Lemma~\myref{lem:no_speed:upper:diff_bounds} provides the following lower bound for $\Phi(i)$ which is valid for $i\geq 1$:
    \begin{align*}
        \Phi(i) & \geq 3\sqrt{i+2} - 3(i+3)\cdot \frac{1}{2\sqrt{i+1}} - 3\left(\frac{1}{2}i^2+\frac{5}{2}i+4\right)\cdot \frac{1}{4}i^{-\frac{3}{2}} \\
                & \geq 3\sqrt{i} - 3(i+3)\cdot \frac{1}{2\sqrt{i}} - 3\left(\frac{1}{2}i^2+\frac{5}{2}i+4\right)\cdot \frac{1}{4}i^{-\frac{3}{2}}     \\
                & = \frac{9}{8}\sqrt{i} -\frac{51}{8}i^{-\frac{1}{2}} - 3i^{-\frac{3}{2}} .
    \end{align*}
    This lower bound is clearly increasing since all constituent terms are increasing.
    At $i=5$, the binding function reaches a value of $-\frac{27\sqrt{5}}{100} \approx -0.604$ which exceeds $\Phi(1)$.
    This guarantees that $\Phi(i)$ exceeds $\Phi(1)$ for all $i\geq 5$.
    From here, one can simply evaluate $\Phi(i)$ at the remaining values $i=2,3,4$ to discover that $\Phi(i)\geq -0.642$ for all integers $i\geq 1$.
    This allows us to replace terms 4,6,8 with a lower bound of $-0.642$ in $H(i)$, leaving only Term 7 (see last Inequality for $H(i)$ above) and a few constants:
    $    H(i) \geq -2-\log_2 9-0.642 - 3\cdot 2^{i+2}f^{<2>}(i+1).
    $
    Lemma~\myref{lem:no_speed:upper:diff_bounds} allows us to place the needed upper bound on $f^{<2>}(i+1)$:
    \begin{align*}
        f^{<2>}(i+1)                 & \leq -\frac{1}{4}(i+1)^{-\frac{3}{2}}                 \\
        - 3\cdot 2^{i+2}f^{<2>}(i+1) & \geq 3\cdot 2^i(i+1)^{-\frac{3}{2}}                   \\
        H(i)                         & \geq 3\cdot 2^i(i+1)^{-\frac{3}{2}} -2-\log_2 9-0.642
    \end{align*}

    We now have a satisfactory lower bound on $H(i)$ that can be substituted into Equation \myeqref{eq:key_pt_bigo_iso}:
    \begin{equation*}
        h(u_i) \leq \frac{-2-\log_2 9 - 0.642 + 3\cdot 2^{i}(i+1)^{-\frac{3}{2}} + \log_2 c}{\log_2(u_i)} \qquad \forall i\geq 1 .
    \end{equation*}
    Selecting $c=56.18$ is sufficient to cancel out all other constant terms.
    Substituting the concrete value of $\log_2(u_i)$ yields
    \begin{align*}
         & \frac{3\cdot 2^{i}(i+1)^{-\frac{3}{2}}}{3\cdot 2^i\sqrt{i+1}-1}
        \geq \frac{3\cdot 2^{i}(i+1)^{-\frac{3}{2}}}{3\cdot 2^i\sqrt{i+1}}
        = (i+1)^{-2} .
    \end{align*}
    For $i \geq 1$,
    \begin{align*}
        3\cdot 2^i\sqrt{i+1}-1               & \geq 2^{i+1}    \\
        \log_2(u_i)                          & \geq 2^{i+1}    \\
        \log_2 \log_2(u_i)                   & \geq i+1        \\
        \left(\log_2 \log_2(u_i)\right)^{-2} & \leq (i+1)^{-2}
    \end{align*}

    In summary, $h(u) = \left(\log_2 \log_2(u)\right)^{-2}$ and $c=56.18$ are sure to satisfy the following statement, which is a key criteria for Lemma~\myref{lem:no_speed:upper:only_key_pts}:
    \begin{align*}
        1+\frac{2}{d}s_{i+1} \leq cu_i^{4-h(u_i)} &  & \forall i\geq 1 .
    \end{align*}
    $u_1$ has the exact value $2^{6\sqrt{2}-1} \approx 179.18$.
    $F(u)=cu^{4-h(u)}$ is definitely increasing for all $u>2$ because $h(u)$ is positive and decreasing while $u$ is greater than $1$ and increasing.
    This justifies applying Lemma~\myref{lem:no_speed:upper:only_key_pts} to conclude the following:
    \begin{align*}
        CR_{\mathcal{A}}(u) \leq 56.18u^{4-\left(\log_2 \log_2(u)\right)^{-2}} &  & \forall u > u_1 .
    \end{align*}
    This is almost sufficient to prove statement \myeqref{eq:no_speed:upper:upboundthm_late} of Theorem \myref{thm:nospeed_upbound};
    it covers all $u > u_1$.

    Earlier values of $CR_{\mathcal{A}}(u)$ can be calculated manually using Lemma~\myref{lem:si_recurrence}:
    $
        s_i = u_id + (2u_i-1)s_{i-1}
    $, meaning that
    \begin{align*}
        u_0    & = 4,~
        u_1 = 2^{6\sqrt{2}-1} \approx 179.18 ,~
        u_2 = 2^{12\sqrt{3}-1} \approx 903152.25 \\
        s_{-1} & = 0 ,~
        s_{0} = 4d ,~
        s_{1} = \left(\frac{9}{2}\cdot 2^{6\sqrt{2}}-4\right)d \approx 1608.64d .
    \end{align*}
    Substituting these values into Lemma~\myref{lem:no_speed:lower:cru_value} yields the following:
    \begin{align*}
        CR^0_{\mathcal{A}}(u) & = 1+\frac{2}{d}s_{-1}    &  & \forall \; 1\leq u\leq u_0                           \\
                              & = 1                      &  & \forall\; 1\leq u\leq 4                              \\
        CR^1_{\mathcal{A}}(u) & = 1+\frac{2}{d}s_{0}     &  & \forall \; 1\leq u\leq u_1                           \\
                              & = 9                      &  & \forall\; 1\leq u\leq 2^{6\sqrt{2}-1} \approx 179.18 \\
        CR^0_{\mathcal{A}}(u) & = 1+\frac{2}{d}s_{1}     &  & \forall \; u_0< u\leq u_2                            \\
                              & = 9\cdot 2^{6\sqrt{2}}-7 &  & \forall\; 4<u\leq 2^{12\sqrt{3}-1} \approx 903152.25 \\
                              & \approx 3218.27 .
    \end{align*}
    From this, we can conclude the following about $CR_{\mathcal{A}}(u)$:
    \begin{align*}
        CR_{\mathcal{A}}(u) & = 9             & 1\leq u \leq 4  \\
        CR_{\mathcal{A}}(u) & \approx 3218.27 & 4< u \leq u_1 .
    \end{align*}
    The latter statement finishes the proof of statement \myeqref{eq:no_speed:upper:upboundthm_late}.
    We know that $F(u)$ is increasing for all $u>2$ and that $F(4) = 3595.52 > 3218.27$, meaning that $F(u) \geq CR_{\mathcal{A}}(u)$ for all $4<u\leq u_1$.
    This concludes the proof of Theorem \myref{thm:nospeed_upbound} and all of the statements comprising it. ~\qed
\end{proof}
\end{appendixonly}

\section{Unknown Speed and Starting Distance}
\mylabel{sec:no_knowledge}

In this section, we analyze the competitive ratio of search when both the speed and starting distance of the mobile target are unknown to the searcher. Our proof of the upper bound will make use of the following lemma.

\begin{lemma} \mylabel{lem:d1_worstcase}
    $T^\sigma_{\mathcal{A}}(u,d)\leq T^\sigma_{\mathcal{A}}(ud,1)$ for $\sigma\in \{0,1\}$ and $CR_{\mathcal{A}}(u,d)\leq CR_{\mathcal{A}}(ud,1)$, for all strategies $\mathcal{A}$.
\end{lemma}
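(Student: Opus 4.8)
The plan is to exploit the key coincidence that the offline optimum is identical for the two instances: $T_{opt}(u,d) = ud = (ud)\cdot 1 = T_{opt}(ud,1)$. Because of this, the competitive-ratio claim follows immediately from the running-time claim, since $CR_{\mathcal{A}}(u,d) = \max_\sigma T^\sigma_{\mathcal{A}}(u,d)/(ud)$ and $CR_{\mathcal{A}}(ud,1) = \max_\sigma T^\sigma_{\mathcal{A}}(ud,1)/(ud)$ share the denominator $ud$, so a termwise bound $T^\sigma_{\mathcal{A}}(u,d) \le T^\sigma_{\mathcal{A}}(ud,1)$ for each $\sigma$ passes through the maximum. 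Thus everything reduces to the first inequality, and by the obvious left--right symmetry it suffices to treat one fixed $\sigma$. I would set $h(t) = (-1)^\sigma \mathcal{A}(t)$, $p(t) = d + (1-\tfrac1u)t$ (the signed distance of the $(u,d)$-target on side $\sigma$), and $q(t) = 1 + (1-\tfrac1{ud})t$ (the same for the $(ud,1)$-target), so that catching the respective target means $h(t)=p(t)$, resp.\ $h(t)=q(t)$.

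First I would record two elementary facts. (i) For every $t \ge ud$ we have $q(t) \ge p(t)$: indeed $q(t)-p(t) = (d-1)\big(\tfrac{t}{ud}-1\big)$, which is $\ge 0$ precisely when $t \ge ud$ (using $d\ge 1$), with equality to $ud$ at $t=ud$. So the faster, closer $(ud,1)$-target, while ahead in distance initially, has fallen behind the $(u,d)$-target by the common offline time $ud$. (ii) The catching time $t_q := T^\sigma_{\mathcal{A}}(ud,1)$ satisfies $t_q \ge ud$: since the set $\{t : h(t)=q(t)\}$ is closed, if it is nonempty the infimum is attained, so $h(t_q) = q(t_q) \ge 1 > 0$, hence the robot is at distance $q(t_q)$ from the origin; the speed-$1$ constraint and $\mathcal{A}(0)=0$ then force $t_q \ge q(t_q) = 1 + (1-\tfrac1{ud})t_q$, i.e.\ $t_q \ge ud$. (If $\{t: h(t)=q(t)\}$ is empty then $t_q=\infty$ and the inequality is trivial.)

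I would then finish with the intermediate value theorem applied to $g(t) := h(t) - p(t)$ on $[0,t_q]$. At $t=0$, $g(0) = 0 - d = -d < 0$; at $t=t_q$, $g(t_q) = q(t_q) - p(t_q) \ge 0$ by facts (i) and (ii). Since $\mathcal{A}$, and hence $g$, is continuous (Definition~\myref{def:strat}), there is $t_p \in (0,t_q]$ with $g(t_p)=0$, i.e.\ the robot catches the $(u,d)$-target on side $\sigma$ by time $t_p$. Therefore $T^\sigma_{\mathcal{A}}(u,d) \le t_p \le t_q = T^\sigma_{\mathcal{A}}(ud,1)$, as required.

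The main obstacle --- really the only non-routine point --- is isolating the right pair of facts (i) and (ii): that the two instances share the offline optimum $ud$, that the $(ud,1)$-target is therefore caught \emph{later in space}, and that $ud$ is a hard floor on any online catching time because the robot cannot reach distance $q(t)$ before time $q(t)$. Once these are in place the IVT argument is mechanical; the only care needed is to keep the argument uniform in $\sigma$ and to dispose of the degenerate cases $d=1$ (the instances coincide) and $t_q=\infty$.
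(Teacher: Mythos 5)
Your proof is correct and follows essentially the same route as the paper: compare positions of the two targets, observe that for $t \ge ud$ the $(ud,1)$-target is at least as far from the origin as the $(u,d)$-target, note that the $(ud,1)$-target cannot be caught before time $ud$, and conclude by continuity that catching the farther target forces the nearer one to have been caught already, then pass to competitive ratios via the shared denominator $T_{opt}=ud$. You simply make explicit (via the IVT and the closedness of the catching set) the steps the paper leaves implicit, and you handle the degenerate cases $d=1$ and $t_q=\infty$, which the paper glosses over.
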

\begin{proof}
    Consider two moving targets: the first with $u_1=u, d_1=d$ and the second with $u_2=ud, d_2=1$, and respective speeds $v_1 = 1 -\frac1u$ and $v_2=1-\frac1{ud}$. The first target is not further from the origin than the second target at all times $t\geq ud$, and this is also the earliest time at which either target can be caught by the searcher. If we note that
    \begin{align*}
        t \geq ud
         & \Leftrightarrow t\left(u^{-1}-(ud)^{-1}\right) \geq d-1
        \Leftrightarrow 1 + (1-(ud)^{-1})t \geq d + (1-u^{-1})t
    \end{align*}
    then we conclude that the first target cannot be caught after the second target. This implies that $T^\sigma_{\mathcal{A}}(u,d) \leq T^\sigma_{\mathcal{A}}(ud,1)$ for $\sigma\in\{0,1\}$. It follows that
    \begin{align*}
        \frac{\max\left\{T^0_{\mathcal{A}}(u,d), T^1_{\mathcal{A}}(u,d)\right\}}{ud} & \leq\frac{\max\left\{T^0_{\mathcal{A}}(ud,1), T^1_{\mathcal{A}}(ud,1)\right\}}{ud}
    \end{align*}
    implying that $CR_{\mathcal{A}}(u,d) \leq CR_{\mathcal{A}}(ud,1)$, which proves Lemma~\myref{lem:d1_worstcase}. ~\qed
\end{proof}

Motivated by Lemma~\myref{lem:d1_worstcase}, one is led to consider running the previous search Algorithm~\myref{alg:no_speed:upper:sqrt_for_upbound} under the assumption that the target's initial distance is $1$.
\begin{algorithm}[H]
    \caption{\mylabel{alg:no_knowledge} (Search with unknown speed and initial distance)}
    \begin{algorithmic}[1]
        \State Execute Algorithm~\myref{alg:no_speed:upper:sqrt_for_upbound} as though $d=1$ (regardless of the true value of $d$)
    \end{algorithmic}
\end{algorithm}

\begin{theorem} \mylabel{thm:noknow_upbound_cr2param}
    In the case where both initial distance $d$ and speed $v$ are unknown, the competitive ratio of Algorithm~\myref{alg:no_knowledge} satisfies the following bounds:
    \begin{align}
        CR_{\mathcal{A}}(u,d) & \leq \mylabel{eq:noknow_upbound_eq12}
        \left\{
        \begin{array}{ll}
            1+\frac8d                                                           & \mbox{~if $ud \leq 4$} \\
            1+\frac{1}{d}\left(56.18(ud)^{4-(\log_2 \log_2 (ud))^{-2}}-1\right) & \mbox{~if $ud > 4$}
        \end{array}
        \right.
    \end{align}
\end{theorem}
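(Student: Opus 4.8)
The plan is to reduce the two-parameter competitive ratio of Algorithm~\myref{alg:no_knowledge} to the one-parameter bound of Theorem~\myref{thm:nospeed_upbound} via a refinement of Lemma~\myref{lem:d1_worstcase}. Write $\mathcal{A}$ for the strategy executed by Algorithm~\myref{alg:no_knowledge}: it is precisely the zigzag strategy of Algorithm~\myref{alg:no_speed:upper:sqrt_for_upbound} with the distance input fixed to $1$, so its leg lengths $x_i$ and partial sums $s_i=\sum_{n=0}^{i}x_n$ are the ``$d=1$'' quantities and are independent of the \emph{true} distance $d$. The target bound we aim for is $CR_{\mathcal{A}}(u,d)\le 1+\tfrac1d\bigl(CR_{\mathcal{A}}(ud,1)-1\bigr)$, after which substituting Theorem~\myref{thm:nospeed_upbound} finishes the proof; note that the naive inequality $CR_{\mathcal{A}}(u,d)\le CR_{\mathcal{A}}(ud,1)$ from Lemma~\myref{lem:d1_worstcase} is too weak, so the extra factor $\tfrac1d$ must be extracted by hand.

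First I would recompute the catch time against a true target of evasiveness $u$ at distance $d$, keeping the algorithm's assumed distance $1$ and the true distance $d$ separate. Repeating the computation from the proof of Lemma~\myref{lem:no_speed:lower:cru_value}: if this target on side $\sigma$ is caught on round $k$, then at the start of round $k$ the robot is at the origin having spent $2s_{k-1}$ time, the target is a distance $d+2vs_{k-1}$ away (with $v=1-\tfrac1u$), and hence
\[
T^\sigma_{\mathcal{A}}(u,d)=2s_{k-1}+u\bigl(d+2vs_{k-1}\bigr)=ud+2u\,s_{k-1},
\qquad
CR^\sigma_{\mathcal{A}}(u,d)=1+\tfrac{2}{d}\,s_{k-1}.
\]
Applying the same identity to the target $(ud,1)$ --- where now the true distance coincides with the algorithm's assumed distance, so this is exactly the setting of Section~\myref{sec:no_speed} and Lemma~\myref{lem:no_speed:lower:cru_value} applies directly --- gives $CR^\sigma_{\mathcal{A}}(ud,1)=1+2\,s_{k'-1}$, where $k'$ is the round on which $(ud,1)$ is caught on side $\sigma$.

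The crucial step is to bound $s_{k-1}\le s_{k'-1}$. By the core equivalence in the proof of Lemma~\myref{lem:d1_worstcase}, the target $(u,d)$ is never farther from the origin than the target $(ud,1)$ once $t\ge ud$, and $ud$ is the first time either can be caught; therefore $T^\sigma_{\mathcal{A}}(u,d)\le T^\sigma_{\mathcal{A}}(ud,1)$. Since the rounds of $\mathcal{A}$ occupy disjoint, time-ordered intervals and the cumulative distances $s_i$ are strictly increasing ($x_i\ge u_i\ge 1$), an earlier catch time forces a catch round that is no later, i.e.\ $k\le k'$, whence $s_{k-1}\le s_{k'-1}$. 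This is precisely where the $\tfrac1d$ improvement originates: against $(u,d)$ the overhead $2u\,s_{k-1}$ over the optimum scales only with $u$, whereas against $(ud,1)$ the overhead $2ud\,s_{k'-1}$ scales with $ud$, and both are divided by the same optimal time $ud$.

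Putting the pieces together, for each $\sigma$,
\[
CR^\sigma_{\mathcal{A}}(u,d)=1+\tfrac{2}{d}\,s_{k-1}\le 1+\tfrac{2}{d}\,s_{k'-1}\le 1+\tfrac1d\bigl(CR_{\mathcal{A}}(ud,1)-1\bigr),
\]
and taking the maximum over $\sigma\in\{0,1\}$ yields $CR_{\mathcal{A}}(u,d)\le 1+\tfrac1d\bigl(CR_{\mathcal{A}}(ud,1)-1\bigr)$, where $CR_{\mathcal{A}}(ud,1)$ is the competitive ratio of Algorithm~\myref{alg:no_speed:upper:sqrt_for_upbound} on evasiveness $ud$ with known distance. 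Finally I would invoke Theorem~\myref{thm:nospeed_upbound}: if $1\le ud\le 4$ then $CR_{\mathcal{A}}(ud,1)=9$, giving $CR_{\mathcal{A}}(u,d)\le 1+\tfrac8d$; and if $ud>4$ then $CR_{\mathcal{A}}(ud,1)\le 56.18\,(ud)^{4-(\log_2\log_2(ud))^{-2}}$, giving the second stated bound. The main obstacle is the rounding argument $k\le k'$ in the third paragraph together with its edge cases ($k\in\{0,1\}$, where $s_{-1}=0$); the rest is bookkeeping and substitution.
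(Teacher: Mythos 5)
Your proposal is correct and follows essentially the same route as the paper's own proof: express $CR^\sigma_{\mathcal{A}}(u,d)$ as $1+\tfrac{2}{d}s_{k-1}$, use Lemma~\myref{lem:d1_worstcase} to get $k\le k'$ and hence $s_{k-1}\le s_{k'-1}$, deduce $CR_{\mathcal{A}}(u,d)\le 1+\tfrac1d\bigl(CR_{\mathcal{A}}(ud,1)-1\bigr)$, and substitute Theorem~\myref{thm:nospeed_upbound}. You are somewhat more explicit than the paper about keeping the algorithm's assumed distance ($=1$) separate from the true $d$, about arguing per side $\sigma$ before taking the max, and about the edge case $s_{-1}=0$, but these are clarifications of the same argument rather than a different one.
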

\begin{proof}
    By the same reasoning as in Lemma~\myref{lem:no_speed:lower:cru_value}, $CR_{\mathcal{A}}(u,d) = 1+\frac{2}{d}s_{k-1}$ where $k$ is the round on which the target with evasiveness $u$ and initial distance $d$ on side $\sigma$ is caught. By extension, $s_{k-1} = \frac{d}{2}\left(CR_{\mathcal{A}}(u,d)-1\right)$. Consider two targets: the first with $u_1=u, d_1=d$ and the second with $u_2=ud, d_2=1$.
    By Lemma~\myref{lem:d1_worstcase}, $T_{\mathcal{A}}(u,d)\leq T_{\mathcal{A}}(ud,1)$, meaning that target 2 is caught no sooner than target 1. If we define $k_1,k_2$ to be the rounds on which each respective target is caught, we can conclude that $k_1\leq k_2$ and that $s_{k_1-1}\leq s_{k_2-1}$. It follows from our prior reasoning that
    $    \frac{d}{2}\left(CR_{\mathcal{A}}(u,d)-1\right) \leq\frac{1}{2}\left(CR_{\mathcal{A}}(ud,1)-1\right)
    $.
    Finally, isolating $CR_{\mathcal{A}}(u,d)$ yields
    $        CR_{\mathcal{A}}(u,d) \leq1+\frac{1}{d}\left(CR_{\mathcal{A}}(ud,1)-1\right) .
    $

    Algorithm~\myref{alg:no_knowledge} runs Algorithm~\myref{alg:no_speed:upper:sqrt_for_upbound} under the assumption that the starting distance of the mobile target is $1$, meaning that Theorem \myref{thm:nospeed_upbound} provides an upper bound on $CR_{\mathcal{A}}(u,1)$. Since we are interested in $CR_{\mathcal{A}}(ud,1)$, we replace occurrences of $u$ with $ud$. Doing so provides us with statement \myeqref{eq:noknow_upbound_eq12}, concluding the proof of Theorem \myref{thm:noknow_upbound_cr2param}. ~\qed
\end{proof}

We note that since $ud \leq \max\left\{u,d\right\}^2 = M^2$, the upper bound presented in Theorem \myref{thm:noknow_upbound_cr2param} is a strict improvement (asymptotically) over the previous best known bound, $O \left(\frac1d M^8 \log_2^2 M \log_2 \log_2 M\right)$, in~\cite{coleman2023line}. Also the lower bound in the case where $d$ is known extends trivially to the case where $d$ is unknown, yielding that no strategy $\mathcal{A}$ could satisfy $CR_{\mathcal{A}}(u,d) \in O\left(u^{4-\varepsilon}\right)$, for any constant $\varepsilon>0$.

\section{Conclusion}\mylabel{sec:conclusion}

In this paper, we considered linear search for an escaping oblivious mobile target by an autonomous mobile agent. Based on the competitive ratio, our algorithm and its analysis indicates optimality up to low order terms in the exponent for the case when the speed $0 \leq v <1$ of the mobile target is unknown to the searcher, thus answering an open problem in~\cite{coleman2023line}. We also analyzed and improved on previous results in~\cite{coleman2023line} when both $d,v$ are unknown; however, tight bounds for this case remain elusive. A most interesting (and challenging) direction for future research is group search  (in the context of linear search) by a multi-agent system of searchers with various communication behaviours and capabilities.

\bibliographystyle{abbrv}
\bibliography{refs}

% % Short Version
% \mypart
% \setprefix{}
% \excludecomment{appendixonly}
% \includecomment{mainonly}

% \input{content.tex}
% \bibliographystyle{abbrv}
% \bibliography{refs}

% % Short Version - Appendix
% \newpage
% \mypart
% \setprefix{apx:}
% \section*{Appendix}
% \includecomment{appendixonly}
% \excludecomment{mainonly}

% \input{content.tex}

\end{document}